\crefname{section}{Sec.}{Secs.}
\theoremstyle{thmstyleone}
\newtheorem{lemma}{Lemma}
\Crefname{definition}{Def.}{Defs.}
\Crefname{proposition}{Prop.}{Props.}
\theoremstyle{thmstyletwo}%
\theoremstyle{thmstylethree}%
\newtheorem{definition}{Definition}%
\renewcommand{\vec}[1]{\bm{#1}}
\newcommand{\mat}[1]{\bm{#1}}
\renewcommand{\set}[1]{\mathcal{#1}}
\newcommand{\row}{A}
\newcommand{\column}{B}
\newcommand{\cycle}{C}
\newcommand*{\defeq}{\mathrel{\vcenter{\baselineskip0.5ex\lineskiplimit0pt\hbox{\scriptsize.}\hbox{\scriptsize.}}}%
	=}
\DeclareMathOperator*{\argmax}{arg\,max}
\DeclareMathOperator*{\argmin}{arg\,min}
\DeclareMathOperator{\vect}{vec}
\DeclareMathOperator*{\diag}{diag}
\DeclareMathOperator{\tr}{tr}
\newcommand{\QUBO}{\textsc{Qubo}\@\xspace}
\newcommand{\FPGA}{\textsc{FPGA}\@\xspace}
\newcommand{\QAP}{\textsc{QAP}\@\xspace}
\newcommand{\place}{\textsc{Placement}\@\xspace}
\newcommand{\CX}{\textsc{Cyclic}\textsc{Expansion}\@\xspace}
\newcommand{\cost}[1]{c\left(#1\right)}
\begin{document}

\title{FPGA-Placement via Quantum Annealing}

\author{Thore Gerlach}
\orcid{...}
\affiliation{%
  \institution{Fraunhofer IAIS}
  \city{Sankt Augustin}
  \country{Germany}}
\email{thore.gerlach@iais.fraunhofer.de}

\author{Stefan Knipp}
\orcid{...}
\affiliation{%
  \institution{Thales SIX}
  \city{Ditzingen}
  \country{Germany}}
\email{Stefan.KNIPP@thalesgroup.com}

\author{David Biesner}
\orcid{...}
\affiliation{%
 \institution{Fraunhofer IAIS}
 \city{Sankt Augustin}
 \country{Germany}}
\email{david.biesner@iais.fraunhofer.de}
 
\author{Stelios Emmanouilidis}
\orcid{...}
\affiliation{%
	\institution{Fraunhofer IAIS}
	\city{Sankt Augustin}
	\country{Germany}}
\email{stelios.emmanouilidis@iais.fraunhofer.de}


\author{Klaus Hauber}
\orcid{...}
\affiliation{%
	\institution{Thales SIX}
	\city{Ditzingen}
	\country{Germany}}
\email{Klaus.HAUBER@thalesgroup.com}

\author{Nico Piatkowski}
\orcid{...}
\affiliation{%
	\institution{Fraunhofer IAIS}
	\city{Sankt Augustin}
	\country{Germany}}
\email{nico.piatkowski@iais.fraunhofer.de}

\renewcommand{\shortauthors}{Gerlach et al.}

\begin{abstract}
	Field-Programmable Gate Arrays (FPGAs) have asserted themselves as vital assets in contemporary computing by offering adaptable, reconfigurable hardware platforms. 
	FPGA-based accelerators incubate opportunities for breakthroughs in areas, such as real-time data processing, machine learning or cryptography---to mention just a few. 
	The procedure of placement---determining the optimal spatial arrangement of functional blocks on an FPGA to minimize communication delays and enhance performance---is an NP-hard problem, notably requiring sophisticated algorithms for proficient solutions.
	Clearly, improving the placement leads to a decreased resource utilization during the implementation phase.
	Adiabatic quantum computing (AQC), with its capability to traverse expansive solution spaces, has potential for addressing such combinatorial problems. 
	In this paper, we re-formulate the placement problem as a series of so called quadratic unconstrained binary optimization (\QUBO) problems which are subsequently solved via AQC.
	Our novel formulation facilitates a straight-forward integration of design constraints.
	Moreover, the size of the sub-problems can be conveniently adapted to the available hardware capabilities.
	Beside the sole proposal of a novel method, we ask whether contemporary quantum hardware is resilient enough to find placements for real-world-sized FPGAs. 
	A numerical evaluation on a D-Wave Advantage 5.4 quantum annealer suggests that the answer is in the affirmative. 
\end{abstract}

\begin{CCSXML}
	<ccs2012>
	<concept>
	<concept_id>10010583.10010600.10010628</concept_id>
	<concept_desc>Hardware~Reconfigurable logic and FPGAs</concept_desc>
	<concept_significance>500</concept_significance>
	</concept>
	<concept>
	<concept_id>10010583.10010786.10010813.10011726</concept_id>
	<concept_desc>Hardware~Quantum computation</concept_desc>
	<concept_significance>500</concept_significance>
	</concept>
	<concept>
	<concept_id>10003752.10003809.10003716.10011136</concept_id>
	<concept_desc>Theory of computation~Discrete optimization</concept_desc>
	<concept_significance>300</concept_significance>
	</concept>
	<concept>
	<concept_id>10002950.10003624.10003625.10003627</concept_id>
	<concept_desc>Mathematics of computing~Permutations and combinations</concept_desc>
	<concept_significance>300</concept_significance>
	</concept>
	</ccs2012>
\end{CCSXML}

\ccsdesc[500]{Hardware~Reconfigurable logic and FPGAs}
\ccsdesc[500]{Hardware~Quantum computation}
\ccsdesc[300]{Theory of computation~Discrete optimization}
\ccsdesc[300]{Mathematics of computing~Permutations and combinations}

\keywords{Quantum Computing, QUBO, FPGA, Placement, QAP, Permutations}


\maketitle

\begin{figure*}[t]
	\centering
	\includegraphics[width=0.9\textwidth]{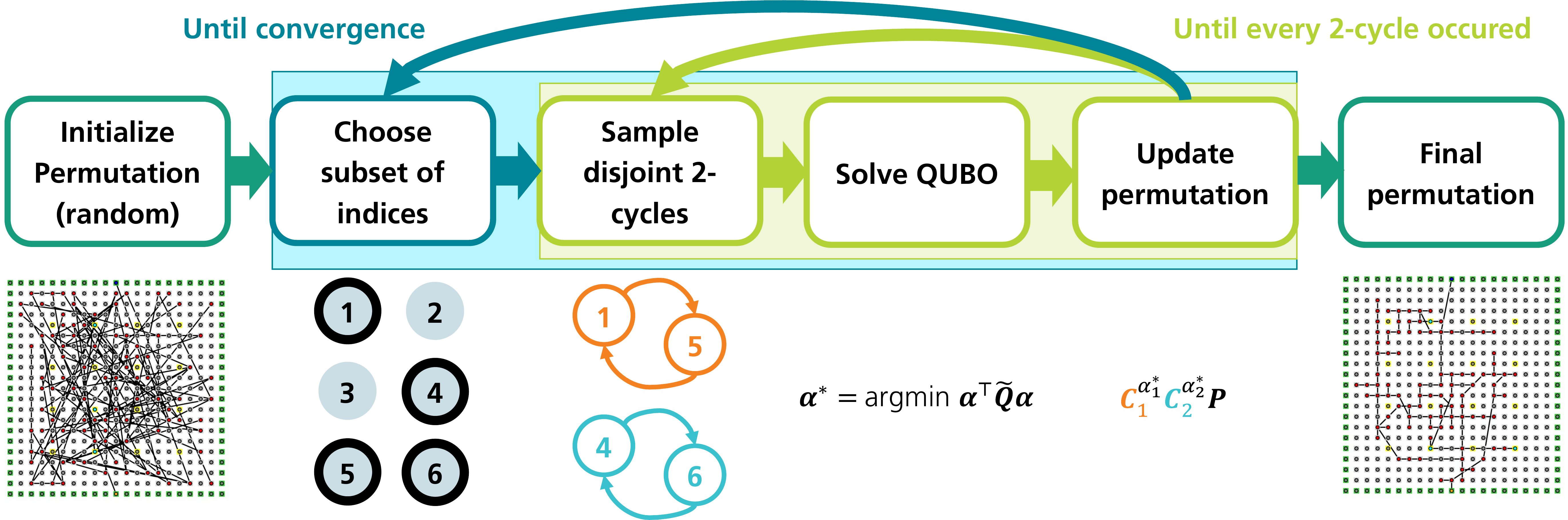}
	\caption{Flow chart of the \CX (\cref{alg:expansion}): Given an initial permutation (random), we choose a sub-problem and iteratively sample random disjoint cycles, which are used to formulate a \QUBO problem.
	This \QUBO formulation is solved with quantum annealing, giving us a binary vector $\vec{\alpha}^*$, indicating which cycle should be applied to the current permutation. If every cycle occurred, we choose a new sub-problem and repeat this procedure until convergence.}
	\label{fig:algo}
\end{figure*}

\section{Introduction}
%
%
%
%
Logic optimization, placing and routing are fundamental and the most time-consuming steps in the field of chip design for both ASICs and FPGAs \cite{rajavel2011}. 
The number of transistors and logic gates on a single chip is increasing more and more, leading to the mentioned processes consuming more and more time.
This limits the speed of development cycles, which is an issue of productivity but can also be a security issue, 
since faster development cycles for cryptography related algorithms allow for improved security analysis. 

Here, we focus on the placement step, in which we aim to find an ideal placement of functional blocks on the chip.
The advantages of a good placement are two-fold:
On the one hand, minimizing the physical distance between connected elements leads to shorter wire lengths and therefore a higher maximum clock rate.
On the other hand, a good placement can lead to a faster routing process, i.e., the routing algorithm of the connections between elements finding a good solution in fewer iterations and less time.
Since the placement itself is an increasingly time-consuming step, decreasing the runtime of the placement algorithm while maintaining a high solution quality is of great interest.
%

Taking a closer look at the math behind the placement in the floor-planning case, 
we find that it is equivalent to the quadratic assignment problem (\QAP) \cite{anjos2012, feld2017}.
The goal of the \QAP is to assign each given element to a unique location, minimizing a given cost function.
In chip design, that cost function can be the total wire length between connected units given by a placement on the chip.
Minimizing that function leads to a shorter maximum wire length and with that the possibility for a higher maximum clock rate.
The \QAP is an NP-hard combinatorial optimization problem, and moreover, one of the hardest in this class, since there is no approximation algorithm for producing a sub-optimal solution with guarantees in polynomial time \cite{sahni1976p}. 
Quantum computing (QC), especially adiabatic quantum computing (AQC) \cite{albash2018adiabatic,farhi2000quantum} and its physical implementation of quantum annealing (QA), is very promising in solving such problems better than classical\footnote{The term \enquote{classical} is physics jargon and means \enquote{not quantum}.} algorithms.
This can be done by reformulating the given \QAP to a quadratic unconstrained binary optimization (\QUBO) problem.

While real-world quantum devices suffer from a series of technical limitations, there is theoretical evidence that hard combinatorial problems can be solved exactly via AQC. 
We leverage these theoretical insights and describe the first algorithm for placement of functional blocks (e.g., Lookup Tables (LUT), BlockRAMs (BRAM), Digital Signal Processing (DSP)) on an FPGA, based on solving a \QAP via QC.
Our contributions can be summarized as follows:
\begin{itemize}
	\item We propose a theoretically sound iterative adiabatic quantum algorithm for solving an unbalanced \QAP with the ability of incorporating constraints on allowed permutations and convenient adaptation of the problem sizes to available hardware capabilities.
	\item We explain how this algorithm can be applied to solve the \FPGA-\place problem.
	\item We provide an experimental evaluation on software and hardware solvers, including real quantum hardware, that proves the viability of our method. 
\end{itemize}
In \cref{sec:related_work}, we give an overview of related work in the field of \FPGA-\place and investigated quantum solutions. \cref{sec:background} contains background information on the mathematical formulation of the \QAP and basics of QC. We describe details of our quantum algorithm for \FPGA-\place in \cref{sec:method}, schematically depicted in \cref{fig:algo}. In \cref{sec:experiments}, we show the validity of our approach with experiments on a fictional \FPGA architecture. We utilize not only software solvers for solving \QUBO problems but real quantum hardware and digital annealing (classical analogue of QA). Lastly, we conclude the results of our paper in \cref{sec:conclusion}.

\section{Related Work} \label{sec:related_work}

The QAP is a traditional combinatorial optimization problem \cite{koopmans1957}, \cite{lawler1963}. 
It is NP-hard and no classical algorithms are known which can approximate a solution with quality guarantees in polynomial time.
The current state-of-the-art methods for solving the floor planning problem/QAP in \FPGA-\place can mainly be divided into three groups: simulated annealing (SA) \cite{betz1997,ludwin2011}, analytical \cite{gort2012,lin2013} and partitioning-based \cite{breuer1977,maidee2005} approaches.
The SA approach can achieve high quality results, especially in terms of subsequent routing time. However, its running time becomes a major drawback when placing a large circuit. 
Contrary to this, partition-based approaches have a very short running time by recursively partitioning a design.
Nevertheless, this might result in bad quality because the problem is solved locally after partitioning.
The analytic approach compromises this quality-speed trade-off, by being very fast and showing similar performances to the SA approach.
Not every functional block contained in the given net list can be placed anywhere on the chip grid\footnote{We stick to the term \enquote{grid} although the placement problem can indeed be lifted to higher dimensions, e.g., $3$-dimensional chips.}, e.g., a LUT cell cannot be placed on an IO location.
The analytic methods need a post-processing for incorporating these constraints.
There also exist other approaches, e.g., ones who are based on machine learning \cite{pui2017,elgamma2020}.
All of the aforementioned methods heavily rely on approximations and often need good initializations.

The idea of addressing hard combinatorial optimization problem, such as \QAP, with quantum computing naturally arises, since quantum computers look promising for overcoming classical limitations.
The most advanced research in this field is given in
\cite{benkner2020,benkner2021,bhatia2023}, where the paradigm of quantum annealing is applied for solving \QAP.
Still, the only work we came across in our literature research which is concerned with solving the \FPGA-\place problem with quantum computing is \cite{guo2009}. 
This paper uses a combination of a quantum genetic algorithm and SA. 

Note that FPGAs are frequently used as control devices in QC hardware \cite{khalid2004,pilch2019,xu2021}.
In this work we apply QC implementations of FPGA designs, which can then be applied to the development of FPGA control units for quantum computers. 

\section{Background} \label{sec:background}

We start off with some theoretical background on (A)QC in \cref{sec:qc} and then move to the \FPGA-\place problem and its related \QAP in \cref{sec:placement}.

\subsection{Notation}

We denote matrices with bold capital letters (e.g. $\mat{A}$) and vectors with bold lowercase letters (e.g. $\vec{a}$).
Furthermore, we use the following standard terms of linear algebra.
The trace of a matrix $\mat{A} \in \mathbb{R}^{n \times n}$ is indicated by $\tr\left(\mat{A}\right)=\sum_{i=1}^{n}A_{i,i}$.
We denote with $\vect\left(\mat{A}\right)\in\mathbb R^{n^2}$ stacking all rows subsequently into a single vector and with $\diag\left(\vec{a}\right)$ the $n\times n$ diagonal matrix with $\vec{a}\in\mathbb R^n$ as its diagonal.
With $\mat{I}_n$ the $n$-dimensional identity matrix is represented and $\vec{1}_n$ denotes the $n$-dimensional vector consisting only of $1$s.
Finally, we represent with $\mathbb P_{n}$ the set of permutation matrices on $n$ elements,
i.e., $\mat{X}\in\{0,1\}^{n\times n}$ with $\mat{X} \vec{1}_n = \vec{1}_n$ and $\mat{X}^{\top} \vec{1}_n = \vec{1}_n$.

\subsection{Practical Quantum Computing} \label{sec:qc}
Let us quickly introduce the basic notion of what can be considered as quantum computation \cite{nielsen2010quantum}. 
Today, practical QC consists of two dominant paradigms: AQC and gate-based QC. In both scenarios, a quantum state $\ket{\psi}$ for a system with $n$ qubits is a $2^n$ dimensional complex 
vector. In the gate-based framework, a quantum computation is defined as a matrix multiplication $\ket{\vec{\psi}_{\text{out}}}=C\ket{\vec{\psi}_{\text{in}}}$, where the $C$ is a $(2^n \times 2^n)$-dimensional unitary matrix (the circuit), given via a 
series of inner and outer products of low-dimensional unitary matrices (the gates). In AQC---the
framework that we consider in the paper at hand---the result of computation is defined to be the eigenvector $\ket{\phi_{\min}}$ that corresponds to the smallest eigenvalue of some $(2^n \times 2^n)$-dimensional Hermitian matrix $\mat{H}$. In 
practical AQC, $\mat{H}$ is further restricted to be a real diagonal matrix whose entries can be 
written as $\mat{H}_{i,i}=\mat{H}(\mat{Q})_{i,i}=\boldsymbol{\vec{x}^{i\top} \mat{Q} \vec{x}^i}$ where $\vec{x}^i=\operatorname{binary}(i)\in\{0,1\}^n$ is some (arbitrary but 
fixed) $n$-bit binary expansion of the unsigned integer $i$. Here, $\mat{Q} \in \mathbb{R}^{n\times n}$ is the so-called \QUBO
matrix. By construction, computing $\ket{\vec{\psi}_{\text{out}}}$ is equivalent to solving $\min_{\vec{x}} \vec{x}^\top \mat{Q} \vec{x}$. Adiabatic quantum algorithms rely on this construction by encoding (sub-)problems as QUBO matrices. 

In both paradigms, the output vector is $2^n$-dimensional and can thus not be read-out efficiently for non-small $n$. Instead, the output of a practical quantum computation is a random integer $i$ between 1 and $2^n$, which is drawn from the probability mass function $\operatorname{Prob}(i)=|\braket{i | \vec{\psi}_{\text{out}}}|^2=|\ket{\vec{\psi}_{\text{out}}}_i|^2$. This sampling step is also known as collapsing the quantum state $\ket{\vec{\psi}_{\text{out}}}$ to a classical binary state $\operatorname{binary}(i)$.

AQC has been applied to numerous hard combinatorial optimization problems \cite{lucas2014}, ranging over satisfiability \cite{kochenberger2005}, routing problems \cite{neukart2017} to machine learning \cite{bauckhage2019}.

\subsection{FPGA-Placement} \label{sec:placement}

It is well known that the \place problem can be formulated as an unbalanced \QAP. 
We now recap this formulation, since our construction in Sec.~\ref{sec:method} relies on it to transform the \place problem into a series of \QUBO problems. 

\begin{definition}
	Given are a set of facilities $\set{F}=\{p_1,\dots,p_n\}$ and a set of locations $\set{L}=\{l_1,\dots,l_n\}$, 
	along with a flow function $f:\set{F}\times\set{F}\to\mathbb R$ between facilities and a distance function $d:\set{L}\times\set{L}\to\mathbb R$ between locations.
	We define the flow and distance matrices as
	\begin{equation*}
		\mat{F}\defeq \left(f(p_i,p_j)\right)_{i,j=1}^n,\quad \mat{D}\defeq \left(d(l_i,l_j)\right)_{i,j=1}^n\;.
	\end{equation*}
	We formulate the quadratic assignment problem (\QAP) as
	\begin{equation*}
		\argmin_{\pi}\ \sum_{i,j=1}^n\mat{F}_{i,j}\mat{D}_{\pi(i),\pi(j)}\;,
	\end{equation*}
	where $\pi:[n]\to[n]$ is a permutation on $[n]\defeq \{1,\dots,n\}$.
	An equivalent formulation is given by the corresponding permutation matrices
	\begin{equation*}
		\argmin_{\mat{P}\in\mathbb P_n}\ \tr \left( \mat{F} \mat{P} \mat{D} \mat{P}^{\top} \right)\;.
	\end{equation*}
	For given flow and distance matrices $\mat{F}$ and $\mat{D}$, we define the cost function as
	\begin{equation*}
		\cost{\mat{A},\mat{B}}\defeq\tr \left( \mat{F} \mat{A} \mat{D} \mat{B}^{\top} \right),\quad \cost{\mat{A}}\defeq\cost{\mat{A},\mat{A}}\;.
	\end{equation*}
\end{definition}

In \FPGA-\place the goal is to place $m$ functional blocks into $n$ physical slots on the \FPGA chip grid such that the total wire length is minimized, with $m\le n$.
The $\mat{F}$ indicates how two functional blocks are connected in the given net list and the distance matrix $\mat{D}$ indicates the distances between different locations on the chip.
However, in the \QAP framework we need the two matrices to have the same dimensionality.

\begin{definition}
	Given two sets of indices $\set{I},\set{J}\subset[n]$ with  $\left|\set{I}\right|=k$, $\left|\set{J}\right|=l$, a matrix $\mat{A}\in\mathbb R^{n\times n}$ and a vector $\vec{a}\in\mathbb R^n$.
	We denote the sub-matrix consisting only of the rows indexed by $\set{I}$ and the columns indexed by $\set{J}$ as $\mat{A}_{\set{I},\set{J}}\in\mathbb R^{k\times l}$.
	Similarly we denote the sub-vector of $\vec{a}$ consisting only of the entries index by $\set{I}$ as $\vec{a}_{\set{I}}\in\mathbb R^k$.
\end{definition}

Using this definition, we can formalize the \place problem as a \QAP by introducing a new matrix $\mat{F}'\in\mathbb R^{n\times n}$,
which is $0$ everywhere except for its $m\times m$ upper block matrix, i.e., ${\mat{F}'_{[m],[m]} =\mat{F}}$.
Descriptively, we introduce $n-m$ ``dummy'' functional blocks which are not connected to any other unit.
The placement objective can now be written as
\begin{equation}
	\argmin_{\mat{P}\in\mathbb P_n}\ \tr \left( \mat{F}' \mat{P} \mat{D} \mat{P}^{\top} \right)\;.
	\label{eq:qap_matrix}
\end{equation}

Even though it is a common way for obtaining a \QAP formulation, inserting $n-m$ dummy elements leads to a large amount of redundancy and high dimensionality, especially if $m\ll n$.
We can overcome this issue by considering sub-permutations.
\begin{definition}
	With $m,n\in \mathbb N, m\le n$
	we define a map $\pi_{m,n}:[m]\to [n]$ to be a sub-permutation if it is injective.
	$\pi_{m,n}$ can also be described by a binary sub-permutation matrix, whose rows sum to $1$ and whose columns contain at most a single $1$.
	The space of sub-permutation matrices of dimension $m\times n$ can hence be defined as
	\begin{equation*}
		\mathbb P_{m, n}\defeq\{\mat{X}\in\{0,1\}^{m\times n}\mid \mat{X} \vec{1}_n = \vec{1}_m, ~ \mat{X}^{\top} \vec{1}_m \le \vec{1}_n\}\;,
	\end{equation*}
	where the relation $\le$ is to be understood component wise.
\end{definition}
With this definition we can rewrite the objective in \cref{eq:qap_matrix}
\begin{equation}
	\argmin_{\mat{P}\in\mathbb P_{m,n}}\ \tr \left( \mat{F} \mat{P} \mat{D} \mat{P}^{\top} \right)\;.
	\label{eq:qap_sub_permutation}
\end{equation}
For $m<n$, the formulation in \cref{eq:qap_sub_permutation} is also called unbalanced \QAP.

\section{Method} \label{sec:method}

We describe a first quantum-compatible problem formulation in \cref{sec:qubo}. Due to the limited number of qubits available on current quantum annealers, we extend the first formulation to an iterative solving approach in \cref{sec:cx}.
%

\subsection{QUBO Formulation} \label{sec:qubo}

To obtain a \QUBO formulation, we can leverage the optimization over all sub-permutation matrices in \cref{eq:qap_sub_permutation}.
\begin{lemma}
	The \QUBO formulation
	\begin{equation}
		\label{eq:qap_qubo}
		\argmin_{\vec{x} \in \{0,1\}^{mn},\vec{s} \in \{0,1\}^{n}}\ \ \vec{x}^{\top}\mat{Q}\vec{x}+\lambda\left\|\mat{A}\vec{x}-\vec{1}_m\right\|^2+\mu\left\|\mat{B}\vec{x}-\vec{s}\right\|^2\;,
	\end{equation}
	where $\mat{Q}\defeq \mat{F}\otimes \mat{D}$, $\mat{\row}\defeq\mat{I}_m\otimes\vec{1}_n^{\top} $ and $\mat{\column}\defeq\vec{1}_m^{\top}\otimes\mat{I}_n $,
	is equivalent to \cref{eq:qap_sub_permutation} for sufficiently large penalty parameters $\lambda,\mu\in\mathbb R_+$
\end{lemma}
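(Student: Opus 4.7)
The plan is to identify the binary variable $\vec{x}\in\{0,1\}^{mn}$ with the vectorization $\vect(\mat{P})$ of some matrix $\mat{P}\in\{0,1\}^{m\times n}$ and then verify three things: (i) the quadratic term reproduces the QAP cost, (ii) the two penalty terms jointly characterise $\mat{P}\in\mathbb P_{m,n}$, and (iii) for sufficiently large $\lambda,\mu$ no infeasible binary $(\vec{x},\vec{s})$ can beat any feasible one in the combined objective.

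Step (i) is the standard Kronecker identity $\vec{x}^{\top}(\mat{F}\otimes\mat{D})\vec{x}=\tr(\mat{F}\mat{P}\mat{D}\mat{P}^{\top})=\cost{\mat{P}}$. For step (ii), a direct computation from $\mat{\row}=\mat{I}_m\otimes\vec{1}_n^{\top}$ and $\mat{\column}=\vec{1}_m^{\top}\otimes\mat{I}_n$ gives $\mat{\row}\vec{x}=\mat{P}\vec{1}_n$ (the vector of row sums of $\mat{P}$) and $\mat{\column}\vec{x}=\mat{P}^{\top}\vec{1}_m$ (the vector of column sums). Hence $\|\mat{\row}\vec{x}-\vec{1}_m\|^2=0$ iff every row of $\mat{P}$ sums to $1$. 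The $\mu$-term is slightly more delicate because $\vec{s}$ is itself a free minimisation variable: for any fixed $\vec{x}$ the optimal slack sets $s_j=(\mat{\column}\vec{x})_j$ whenever $(\mat{\column}\vec{x})_j\in\{0,1\}$, and otherwise contributes a strictly positive cost in that coordinate. After minimising out $\vec{s}$, the $\mu$-term therefore vanishes iff every column sum of $\mat{P}$ lies in $\{0,1\}$, which together with the row-sum condition is exactly the definition of $\mathbb P_{m,n}$.

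Step (iii) is the usual large-enough penalty argument. On binary inputs both penalty expressions are sums of squared integers, hence each is either $0$ or at least $1$; I would let $M\defeq \sum_{i,j,k,l}|\mat{F}_{i,j}\mat{D}_{k,l}|$, a crude but explicit upper bound on $|\cost{\mat{P}'}|$ for any binary $\mat{P}'$, and observe that picking $\lambda,\mu>2M$ forces every optimiser to satisfy both penalty terms with value zero, so the argmin of \cref{eq:qap_qubo} coincides (via $\vec{x}=\vect(\mat{P})$) with the argmin of \cref{eq:qap_sub_permutation}. The only step that genuinely requires care is the slack-variable manoeuvre in step (ii), which turns the inequality $\mat{P}^{\top}\vec{1}_m\le\vec{1}_n$ into a quadratic equality without leaving the QUBO framework; the rest reduces to Kronecker-product bookkeeping and a routine penalty-bound estimate.
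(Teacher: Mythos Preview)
Your argument is correct and follows the same route as the paper's proof: rewrite \cref{eq:qap_sub_permutation} as a binary optimisation with the linear constraints $\mat{A}\vec{x}=\vec{1}_m$ and $\mat{B}\vec{x}=\vec{s}$, $\vec{s}\in\{0,1\}^n$, and then absorb these constraints as squared-norm penalties. Your version is in fact more complete than the paper's sketch---you spell out the Kronecker identity for the objective, justify why the binary slack $\vec{s}$ exactly encodes the column-sum inequality, and provide an explicit threshold $\lambda,\mu>2M$---whereas the paper simply asserts the reformulation and leaves the choice of ``sufficiently large'' penalties unspecified.
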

\begin{proof}
	The objective in \eqref{eq:qap_sub_permutation} can be written as
	\begin{align*}
		\argmin_{\mat{X} \in \{0,1\}^{m\times n}}\ \ &\cost{\mat{X}} \\
		\text{subject to} \ \ \ \ &\mat{X} \vec{1}_n = \vec{1}_m, ~ \mat{X}^{\top} \vec{1}_m \le \vec{1}_n\;.
	\end{align*}
	Using $\vec{x}=\vect\left(\mat{X}\right)$ this can be reformulated to
	\begin{align*}
		\argmin_{\mat{x} \in \{0,1\}^{n^2}}\ \ &\vec{x}^{\top}\mat{Q}\vec{x} \\
		\text{subject to} \ \ \ \ &\mat{\row}\vec{x}=\vec{1}_m, ~\mat{\column}\vec{x}=\vec{s},\ \vec{s}\in\{0,1\}^n\;.
	\end{align*}
	With
	\begin{equation*} 
		\mat{\row}\vec{x}=\vec{1}_m, ~\mat{\column}\vec{x}=\vec{s}
		\Leftrightarrow\left\|\mat{\row}\vec{x}-\vec{1}_m\right\|^2=0, ~\left\|\mat{\column}\vec{x}-\vec{s}\right\|^2=0\;,
	\end{equation*}
	we can incorporate the constraints into our objective with using penalty parameters, ending up with \cref{eq:qap_qubo}.
\end{proof}
Even though we have a quantum-compatible problem formulation for the \QAP at hand, we remark that our problem dimension is $mn+n=n(m+1)$. 
That is, for solving an \FPGA-placement problem with $m$ functional blocks and $n$ grid locations, we need $n(m+1)$ qubits, which is beyond capabilities of current (and upcoming) quantum hardware.
Furthermore, choosing the penalty parameters $\lambda,\mu$ maintaining the equivalence between \cref{eq:qap_matrix} and \cref{eq:qap_qubo} while also having preferable conditioning for quantum hardware is tedious and error prone. 
In \cite{benkner2020}, coarse upper bounds are provided for these parameters.
Furthermore, constraints on the permutation space can not easily be integrated into such a formulation.
However, this is of great importance in \FPGA-\place since we have to take into account that the types of every functional block and the corresponding placement location have to match.
For example, it is impossible to place a LUT onto an IO location (cf. \cref{fig:constraints}).

\subsection{Cyclic Expansion} \label{sec:cx}

The above issues can be overcome by not considering a single \QUBO formulation but a series of \QUBO{s}. 
We follow the approach from \cite{benkner2021} and use a variant of the $\alpha$-expansion algorithm \cite{boykov2001}, which we will henceforth denote as \CX.
This reduces the dimensionality of the \QUBO problem and removes the penalty terms in \cref{eq:qap_qubo}. 
Furthermore, constraints on the allowed sub-permutations can be incorporated easily into this algorithm.
The idea is that instead of optimizing over all permutation matrices at once, an iterative optimization over cyclic permutations is carried out which converges towards the original optimization.
For the upcoming sections we assume that $m=n$, the case $m<n$ follows analogously.

Informally, the cyclic expansion algorithm works as follows:
\begin{enumerate}
	\item Initialize permutation matrix $\mat{P}\in\mathbb P_n$,
	\item Choose a set of simpler permutation matrices $\mathbb{\cycle}\subset \mathbb P_n$ with ${|\mathbb{\cycle}| = k < n}$,
	\item Solve a QUBO of size $k$ to decide which permutation in $\mathbb{\cycle}$ to apply to $\mat{P}$,
	\item Update $\mat{P}$ with the chosen permutation,
	\item Repeat steps 2-4 until convergence of cost $c(\mat{P})$.
\end{enumerate}

In what follows, we define all necessary terms and provide a detailed description of \cref{alg:expansion}, also depicted in \cref{fig:algo}.

\begin{definition}
	Define a $2$-cycle $\mat{\cycle}$ as $\mat{\cycle}\in \mathbb P_n$
	\begin{align*}
		\text{such that}\quad&\exists i,j\in[n],i\neq j:\cycle_{i,j}=\cycle_{j,i}=1\\
		\text{and}\quad&\forall l\in[n],l\notin\{i,j\}:\cycle_{l,l}=1\;,
	\end{align*}
	and denote the set of $2$-cycles with $\mathbb P_n^{(2)}$.
\end{definition}

We remark that any permutation matrix $\mat{P}\in\mathbb P_n$ can be written as a product of $2$-cycles, 
\begin{equation}
	\forall \mat{P}\in\mathbb P_n:\exists \{\cycle_1,\dots,\cycle_s\}\subset \mathbb P_n^{(2)}:\mat{P}=\prod_{i=1}^s\mat{\cycle}_i=\mat{\cycle}_1\cdots\mat{\cycle}_s\;.
	\label{eq:power}
\end{equation}
Instead of optimizing over all $2$-cycles the idea of \CX is to iteratively consider fixed subsets of $\mathbb P_n^{(2)}$. 
\begin{definition}
	Let $\mathbb{\cycle}=\{\cycle_1,\dots,\cycle_s\}\subset \mathbb P_n^{(2)}$ be a set of $2$-cycles and let $\alpha\in\{0,1\}^s$. 
	For $\mat{P}\in\mathbb P_n$ we define
	\begin{equation}
		g\left(\mat{P},\vec{\alpha}, \mathbb{\cycle}\right)\defeq \left( \prod_{i=1}^{s} \mat{\cycle}_i ^{\alpha_i}\right) \mat{P}=\mat{\cycle}_1^{\alpha_1}\cdots\mat{\cycle}_s^{\alpha_s}\mat{P}\;,
		\label{eq:label_prod}
	\end{equation}
	with $\mat{\cycle}_i ^0\defeq\mat{I}_n$, $\mat{\cycle}_i ^1\defeq\mat{\mat{\cycle}_i}$.
	In words, the vector $\vec{\alpha}$ indicates which cycle in $\mathbb{\cycle}$ should be applied to $\mat{P}$.
\end{definition}
For a given $\mat{P}\in\mathbb P_n$ the following objective is optimized in each iteration
\begin{equation}
	\argmin_{\vec{\alpha}\in\{0,1\}^s}\ \cost{g\left(\mat{P},\vec{\alpha},\mathbb{\cycle}\right)} \;.
	\label{eq:qap_cyclic}
\end{equation}
However, \cref{eq:qap_cyclic} is not in \QUBO form and can thus not be directly solved on actual quantum hardware. 
To overcome this issue, we only consider disjoint $2$-cycles.
\begin{definition}
	Let $\mat{\cycle},\mat{\cycle}'\in\mathbb P_n^{(2)}$.
	$\mat{\cycle}$ and $\mat{\cycle}'$ are disjoint if 
	\begin{equation*}
		\left(\cycle_{i,i}=0\Rightarrow\cycle'_{i,i}=1\right) \wedge \left(\cycle'_{i,i}=0\Rightarrow\cycle_{i,i}=1\right)\;,
	\end{equation*}
	which leads to commutativity, i.e., $\mat{\cycle}\mat{\cycle}'=\mat{\cycle}'\mat{\cycle}$.
	We call a set $\mathbb{\cycle}\subset \mathbb P_n^{(2)}$ disjoint if all elements are pairwise disjoint.
\end{definition}

Sets of disjoint $2$-cycles have a large expressive power in terms of covering the whole permutation space.
\begin{lemma}
	Given a permutation matrix $\mat{P}$, there exist two sets $\mathbb{\cycle}$, $\mathbb{\cycle}'\subset \mathbb P_n^{(2)}$ of disjoint $2$-cycles such that
	\begin{equation*}
		\mat{P}=\mat{L}\mat{R},\quad\mat{L}\defeq\prod_{\mat{C}\in\mathbb{\cycle}}\mat{C},\quad \mat{R}\defeq\prod_{\mat{C}'\in\mathbb{\cycle}'}\mat{C}'\;.
	\end{equation*}
\end{lemma}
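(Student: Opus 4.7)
The plan is to invoke the classical fact that every permutation is a product of two involutions, each being a product of disjoint transpositions, and translate it into the matrix-and-set language used here.

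First I would reduce to the case of a single cyclic permutation. Any $\mat{P} \in \mathbb P_n$ decomposes uniquely (up to reordering) into cyclic permutations acting on pairwise disjoint orbits $O_1, \dots, O_r \subseteq [n]$. If for each orbit $O_j$ I can produce two sets $\mathbb{\cycle}_j, \mathbb{\cycle}'_j$ of disjoint $2$-cycles whose every element is supported inside $O_j$ and whose ordered product equals the corresponding cyclic factor, then taking $\mathbb{\cycle} = \bigcup_j \mathbb{\cycle}_j$ and $\mathbb{\cycle}'= \bigcup_j \mathbb{\cycle}'_j$ preserves pairwise disjointness, because $2$-cycles drawn from different orbits cannot move a common index. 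The products $\mat{L}$ and $\mat{R}$ then factor across orbits into the per-orbit factors, giving $\mat{P}=\mat{L}\mat{R}$.

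Second, for a single cycle $c = (a_0, a_1, \dots, a_{k-1})$ (with $k\geq 2$; the case $k=1$ contributes nothing) I would write down the classical ``folding'' decomposition explicitly. Define two permutations that are the identity outside $\{a_0, \dots, a_{k-1}\}$ by
\[
\sigma(a_i) = a_{(-i) \bmod k}, \qquad \tau(a_i) = a_{(1-i)\bmod k}.
\]
Both are involutions: on the orbit, $\sigma$ pairs $a_i$ with $a_{k-i}$ for $i=1,\dots,\lfloor (k-1)/2\rfloor$ (with $a_0$, and $a_{k/2}$ if $k$ is even, as fixed points), and $\tau$ pairs $a_i$ with $a_{1-i \bmod k}$ analogously. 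Hence each decomposes as a product of pairwise disjoint transpositions, which is exactly the data of a disjoint set of $2$-cycles in $\mathbb P_n^{(2)}$. Call these sets $\mathbb{\cycle}'$ and $\mathbb{\cycle}$, so that $\mat{R} = \sigma$ and $\mat{L} = \tau$.

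Third, I would verify $\mat{L}\mat{R} = c$ by a one-line computation:
\[
(\tau\sigma)(a_i) = \tau\bigl(a_{-i \bmod k}\bigr) = a_{(1-(-i))\bmod k} = a_{(i+1)\bmod k},
\]
which matches the action of $c$, and both maps are the identity off the orbit. The main obstacle is really only bookkeeping: I must check that the per-orbit $2$-cycles assemble into globally disjoint sets and that the commutativity guaranteed by disjointness (noted in the paper) lets me write $\mat{L}$ and $\mat{R}$ as plain products $\prod_{\mat{C}\in\mathbb{\cycle}}\mat{C}$ and $\prod_{\mat{C}'\in\mathbb{\cycle}'}\mat{C}'$ without worrying about the order of multiplication. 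Both points follow directly from the disjointness of orbits and the explicit form of $\sigma$ and $\tau$.
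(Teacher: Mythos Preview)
Your argument is correct: the reduction to single cycles via the orbit decomposition, the explicit folding involutions $\sigma(a_i)=a_{-i\bmod k}$ and $\tau(a_i)=a_{(1-i)\bmod k}$, and the verification $(\tau\sigma)(a_i)=a_{(i+1)\bmod k}$ all go through, and the disjointness of the orbits guarantees that the per-cycle $2$-cycle sets assemble into globally disjoint sets $\mathbb{\cycle}$ and $\mathbb{\cycle}'$. The only small point worth making explicit is that for $k=1$ (fixed points) and for the fixed points of $\sigma$ and $\tau$ on each orbit you simply contribute no $2$-cycles, so an empty product (the identity) is understood; this is consistent with the statement.

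As for comparison: the paper does not actually prove this lemma but defers entirely to the reference \cite{benkner2021}. Your write-up is therefore more self-contained than the paper itself. The construction you give is the standard ``every permutation is a product of two involutions'' argument and is almost certainly what the cited source contains, so there is no substantive methodological difference---you have just spelled out what the paper outsources.
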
	
\begin{proof}
	See \cite{benkner2021}.
\end{proof}
With assuming disjoint $2$-cycles we obtain the following result.
\begin{lemma}
	Assume that $\mathbb{\cycle}=\{\mat{\cycle}_1,\dots,\mat{\cycle}_s\}\subset \mathbb P_n^{(2)}$ is a disjoint set of $2$-cycles and let $\mat{P}\in\mathbb P_n$ be a permutation matrix.
	Then, the following identity holds
	\begin{equation}
		g\left(\mat{P},\vec{\alpha},\mathbb{\cycle}\right)=\mat{P}+\sum_{i=1}^{s}\alpha_i\tilde{\vec{\cycle}}_i,\quad \tilde{\vec{\cycle}}_i\defeq\left(\mat{\cycle}_i-\mat{I}_n\right)\mat{P}\;.
		\label{eq:permutation_update}
	\end{equation}
\end{lemma}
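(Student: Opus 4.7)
The plan is to expand the product defining $g$ and exploit the disjointness to collapse all higher-order cross terms. Since each $\alpha_i\in\{0,1\}$, the identity
\begin{equation*}
	\mat{\cycle}_i^{\alpha_i} = \mat{I}_n + \alpha_i\bigl(\mat{\cycle}_i - \mat{I}_n\bigr)
\end{equation*}
holds by inspection in both cases $\alpha_i = 0$ and $\alpha_i = 1$. First I would use this to rewrite
\begin{equation*}
	\prod_{i=1}^{s}\mat{\cycle}_i^{\alpha_i}
	=\prod_{i=1}^{s}\bigl(\mat{I}_n+\alpha_i(\mat{\cycle}_i-\mat{I}_n)\bigr),
\end{equation*}
which is well-defined as a product (in any order) because pairwise disjoint $2$-cycles commute by the lemma the authors already invoke.

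Next I would expand the product into the empty-set term $\mat{I}_n$, the singleton terms $\alpha_i(\mat{\cycle}_i-\mat{I}_n)$, and all cross terms of length at least two. The core of the argument is that every cross term vanishes. Concretely, if $\mat{\cycle}_i$ swaps indices $\{a,b\}$ and $\mat{\cycle}_j$ swaps $\{c,d\}$, then the matrix $\mat{\cycle}_i-\mat{I}_n$ has nonzero entries only in rows and columns indexed by $\{a,b\}$; likewise for $\mat{\cycle}_j-\mat{I}_n$ with support in $\{c,d\}$. The disjointness condition $\bigl(\cycle_{i,i}=0\Rightarrow\cycle'_{i,i}=1\bigr)\wedge\bigl(\cycle'_{i,i}=0\Rightarrow\cycle_{i,i}=1\bigr)$ forces $\{a,b\}\cap\{c,d\}=\emptyset$, so the product $(\mat{\cycle}_i-\mat{I}_n)(\mat{\cycle}_j-\mat{I}_n)$ requires a shared index in its intermediate sum and hence is the zero matrix. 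By induction all higher-order cross terms vanish too.

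Combining these observations,
\begin{equation*}
	\prod_{i=1}^{s}\mat{\cycle}_i^{\alpha_i} = \mat{I}_n + \sum_{i=1}^{s}\alpha_i(\mat{\cycle}_i-\mat{I}_n),
\end{equation*}
and right-multiplying by $\mat{P}$ gives
\begin{equation*}
	g(\mat{P},\vec{\alpha},\mathbb{\cycle}) = \mat{P} + \sum_{i=1}^{s}\alpha_i(\mat{\cycle}_i-\mat{I}_n)\mat{P} = \mat{P} + \sum_{i=1}^{s}\alpha_i\tilde{\vec{\cycle}}_i,
\end{equation*}
which is exactly the claimed identity.

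The only delicate step is the vanishing of the cross terms: one must translate the algebraic disjointness condition (stated via diagonal entries of the $\mat{\cycle}_i$) into the geometric statement that the supports of $\mat{\cycle}_i-\mat{I}_n$ and $\mat{\cycle}_j-\mat{I}_n$ do not overlap on a common index. Once this is done, the rest is a routine expansion. Everything else — commutativity, the distributive law, and the reduction to the sum form — is immediate from the lemmata already available in the paper.
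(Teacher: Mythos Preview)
Your proposal is correct and follows essentially the same approach as the paper: both use the identity $\mat{\cycle}_i^{\alpha_i}=\mat{I}_n+\alpha_i(\mat{\cycle}_i-\mat{I}_n)$ and then argue that $(\mat{\cycle}_i-\mat{I}_n)(\mat{\cycle}_j-\mat{I}_n)=\mat{O}$ for disjoint $2$-cycles because the nonzero entries live in different rows/columns. The only cosmetic difference is that the paper organizes this as a formal induction on $s$, while you expand the full product at once and then invoke the vanishing of cross terms (still by induction on the length); the substance is the same.
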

\begin{proof}
	We proof the statement by induction.
	For $s=1$ \cref{eq:label_prod} reduces to
	\begin{equation*}
		\mat{\cycle}^{\alpha}\mat{P}=\left(1-\alpha\right)\mat{P}+\alpha \mat{\cycle}\mat{P}=\mat{P}+\alpha\left(\mat{\cycle}-\mat{I}_n\right)\mat{P}\;,
	\end{equation*}
	leading to \cref{eq:permutation_update}.
	Multiplying the inverse of $\mat{P}$ to the right, we obtain
	\begin{equation*}
		\mat{\cycle}^{\alpha}=\mat{I}_n+\alpha\left(\mat{\cycle}-\mat{I}_n\right)\;.
	\end{equation*}  
	Now, consider $s$ and assume that \cref{eq:permutation_update} holds for $s-1$. Then
	\begin{align*}
		&\left( \prod_{i=1}^{s} \mat{\cycle}_i ^{\alpha_i}\right) 
		=\left( \prod_{i=1}^{s-1} \mat{\cycle}_i ^{\alpha_i}\right)\mat{\cycle}_{s}^{\alpha_{s}} \\
		=&\left( \mat{I}_n+\sum_{i=1}^{s-1}\alpha_i\left(\mat{\cycle}_i-\mat{I}_n\right)\right) \left(\mat{I}_n+\alpha_{s} \left(\mat{\cycle}_{s}-\mat{I}_n\right) \right) \\
		=&\left( \mat{I}_n+\sum_{i=1}^{s}\alpha_i\left(\mat{\cycle}_i-\mat{I}_n\right)\right)
		\left(\sum_{i=1}^{s}\alpha_i\alpha_s\left(\mat{\cycle}_i-\mat{I}_n\right)\left(\mat{\cycle}_s-\mat{I}_n\right)\right)\;,
	\end{align*}
	and it remains to show that $\left(\mat{\cycle}_i-\mat{I}_n\right)\left(\mat{\cycle}_s-\mat{I}_n\right)=\mat{O}$, where $\mat{O}$ is the $n\times n$ matrix consisting only of zeros.
	Since all $\mat{\cycle}_i$ are $2$-cycles, $\mat{\cycle}_i-\mat{I}_n$ only has $4$ non-zeros entries and since $\mat{\cycle}_i$ and $\mat{\cycle}_s$ are disjoint, they have these entries in different rows/columns.
\end{proof}

\cref{eq:permutation_update} can be inserted into \cref{eq:qap_cyclic} to obtain the following \QUBO.
\begin{lemma}
	Assume that $\{\mat{\cycle}_1,\dots,\mat{\cycle}_s\}\subset \mathbb P_n^{(2)}$ is a disjoint set of $2$-cycles, $\mat{P}\in\mathbb P_n$. 
	A \QUBO formulation equivalent to \cref{eq:qap_cyclic} is given by 
	\begin{equation}
		\argmin_{\vec{\alpha}\in\{0,1\}^s}\ \vec{\alpha}^{\top}\tilde{\mat{Q}}\vec{\alpha}\;,
		\label{eq:alpha_qubo}
	\end{equation}
	with
	\begin{equation}
		\tilde{Q}_{ij}\defeq
		\begin{cases}
			\cost{\tilde{\mat{\cycle}}_i,\tilde{\mat{\cycle}}_j}, &\text{if }i\neq j\;, \\
			\cost{\tilde{\mat{\cycle}}_i}+\cost{\tilde{\mat{\cycle}}_i,\mat{P}}+
			\cost{\mat{P},\tilde{\mat{\cycle}}_i}, &\text{else}\;.
		\end{cases}
		\label{eq:alpha_qubo_def}
	\end{equation}
\end{lemma}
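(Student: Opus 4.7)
The plan is to expand the cost $c(g(\mathbf{P},\vec{\alpha},\mathbb{C}))$ using the affine representation of $g$ from the previous lemma and then exploit bilinearity of $c(\cdot,\cdot)$ together with the identity $\alpha_i^2=\alpha_i$ for binary variables to collect everything into a quadratic form in $\vec{\alpha}$.

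First, I would substitute \cref{eq:permutation_update} into the objective of \cref{eq:qap_cyclic}, writing
\begin{equation*}
c(g(\mathbf{P},\vec{\alpha},\mathbb{C})) = c\!\left(\mathbf{P}+\sum_{i=1}^{s}\alpha_i\tilde{\mathbf{C}}_i,\ \mathbf{P}+\sum_{j=1}^{s}\alpha_j\tilde{\mathbf{C}}_j\right).
\end{equation*}
Because $c(\mathbf{A},\mathbf{B})=\tr(\mathbf{F}\mathbf{A}\mathbf{D}\mathbf{B}^{\top})$ is bilinear in $(\mathbf{A},\mathbf{B})$, this distributes into a constant $c(\mathbf{P},\mathbf{P})=c(\mathbf{P})$, linear cross terms $\sum_i\alpha_i\bigl[c(\tilde{\mathbf{C}}_i,\mathbf{P})+c(\mathbf{P},\tilde{\mathbf{C}}_i)\bigr]$, and a pure quadratic part $\sum_{i,j}\alpha_i\alpha_j c(\tilde{\mathbf{C}}_i,\tilde{\mathbf{C}}_j)$.

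Next, I would use the standard QUBO trick $\alpha_i=\alpha_i^2$ to absorb the linear terms into the diagonal of the quadratic form, yielding
\begin{equation*}
c(g(\mathbf{P},\vec{\alpha},\mathbb{C})) = c(\mathbf{P}) + \sum_{i\neq j}\alpha_i\alpha_j\, c(\tilde{\mathbf{C}}_i,\tilde{\mathbf{C}}_j) + \sum_{i}\alpha_i^2\bigl[c(\tilde{\mathbf{C}}_i)+c(\tilde{\mathbf{C}}_i,\mathbf{P})+c(\mathbf{P},\tilde{\mathbf{C}}_i)\bigr].
\end{equation*}
Reading off the coefficients gives precisely $\tilde{Q}_{ij}$ as defined in \cref{eq:alpha_qubo_def}, and since $c(\mathbf{P})$ is independent of $\vec{\alpha}$, dropping it does not change the argmin, establishing equivalence with \cref{eq:qap_cyclic}.

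I do not expect a serious obstacle here; the only care needed is to observe that the disjointness assumption is what makes the previous lemma's affine identity $g(\mathbf{P},\vec{\alpha},\mathbb{C})=\mathbf{P}+\sum_i\alpha_i\tilde{\mathbf{C}}_i$ valid in the first place, so that no higher-order-in-$\vec{\alpha}$ terms appear when expanding the product $\prod_i \mathbf{C}_i^{\alpha_i}$. Without disjointness, cross terms $\alpha_i\alpha_j(\mathbf{C}_i-\mathbf{I}_n)(\mathbf{C}_j-\mathbf{I}_n)$ would survive and the resulting polynomial in $\vec{\alpha}$ would generally be of degree higher than two, breaking the QUBO form. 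Once the affine representation is in hand, the rest is routine bilinear algebra.
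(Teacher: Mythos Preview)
Your proposal is correct and matches the paper's own proof essentially step for step: substitute the affine representation from the previous lemma, expand by bilinearity of $c(\cdot,\cdot)$, drop the $\vec{\alpha}$-independent constant $\cost{\mat{P}}$, and absorb the linear terms into the diagonal via $\alpha_i^2=\alpha_i$. Your added remark about why disjointness is needed (to prevent higher-order terms) is a helpful clarification that the paper leaves implicit.
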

\begin{proof}
	Since tr and matrix multiplication are linear functions, $c$ is bilinear and we obtain
	\begin{align*}
		\argmin_{\vec{\alpha}\in\{0,1\}^s}&\quad \cost{g\left(\mat{P},\vec{\alpha},\mathbb{\cycle}\right)} \\
		=\argmin_{\vec{\alpha}\in\{0,1\}^s}&\quad\cost{\mat{P}+\sum_{i=1}^{s}\alpha_i\tilde{\mat{\cycle}}_i,\mat{P}+\sum_{j=1}^{s}\alpha_j\tilde{\mat{\cycle}}_j} \\
		=\argmin_{\vec{\alpha}\in\{0,1\}^s}&\quad\cost{\mat{P}}+\sum_{i,j=1}^{s}\alpha_i\alpha_j\cost{\tilde{\mat{\cycle}}_i,\tilde{\vec{\cycle}}_j} \\
		&+\sum_{i=1}^{s}\alpha_i\cost{\tilde{\mat{\cycle}}_i,\mat{P}}
		+\sum_{j=1}^{s}\alpha_j\cost{\mat{P},\tilde{\vec{\cycle}}_j} 
		=\argmin_{\vec{\alpha}\in\{0,1\}^s} \vec{\alpha}^{\top}\tilde{\mat{Q}}\vec{\alpha}\;,
	\end{align*}
	where $\tilde{\mat{Q}}$ is defined as in \cref{eq:alpha_qubo_def}.
\end{proof}
We observe that for a set with $n$ elements, the largest possible set of disjoint $2$-cycles has $\left\lfloor n/2\right\rfloor$ elements. 
Therefore, the dimension of the \QUBO problem in \eqref{eq:alpha_qubo} is way smaller than the size of the original \QUBO in \eqref{eq:qap_qubo} ($s\le \left\lfloor n/2\right\rfloor\ll m(n+1)$). 

The overall iterative method is outlined in \cref{alg:expansion}: Given a permutation matrix $\mat{P}\in\mathbb P_n$, we iteratively choose sets of random disjoint cycles and optimize \cref{eq:alpha_qubo}.
This gives us a binary vector $\vec{\alpha}$, indicating which cycle should be applied to our current permutation matrix. 
After updating $\mat{P}$, the procedure is repeated until convergence.
\begin{algorithm}[t]
	\caption{\CX Algorithm}\label{alg:expansion}
	\begin{algorithmic}[1]
		\Require $\mat{F}\in\mathbb \mathbb R^{m\times m}$, $\mat{D}\in\mathbb R^{n\times n}$, $k\le m$, $k_u\le n - k$
		\Ensure Sub-permutation matrix $\mat{P}\in\mathbb P_{m,n}$ optimizing $\cost{\mat{P}}$
		\State Initialize $\mat{P}\in\mathbb P_{m,n}$ 
		\label{alg:expansion:initialization}
		\Repeat
		\State Choose $\set{I}\subset [m]$, $\left|\set{I}\right|=k$, $\set{J}\subset [n]$, $\left|\set{J}\right|=k_u$ (\cref{sec:cycles})
		\label{alg:expansion:indices}
		\State Construct matrix $\mat{Q}\left(\set{I},\set{J}\right)$ (\cref{sec:subproblem})
		\label{alg:expansion:matrix}
		\Repeat
		\label{alg:expansion:inner_start}
		\State Choose a random set of $2$-cycles $\mathbb{\cycle}$ (\cref{sec:cycles})
		\label{alg:expansion:cycles}
		\State Calculate matrix $\tilde{\mat{Q}}\left(\set{I},\set{J}\right)$ from $\mat{Q}\left(\set{I},\set{J}\right)$ (\cref{eq:alpha_qubo_def}) 
		\label{alg:expansion:submatrix}
		\State $\vec{\alpha}^*\gets \argmin_{\vec{\alpha}}\vec{\alpha}^{\top}\tilde{\mat{Q}}\left(\set{I},\set{J}\right)\vec{\alpha}$
		\label{alg:expansion:qubo}
		\Comment{QC can be used}
		\State $\mat{P}\gets g\left(\mat{P},\vec{\alpha}^*,\mathbb{\cycle}\right)$ (\cref{eq:permutation_update})
		\label{alg:expansion:update}
		\Until{Every $2$-cycle occured in one set}
		\label{alg:expansion:inner_end}
		\Until{A convergence criterium is met}
		\label{alg:expansion:convergence}
	\end{algorithmic}
\end{algorithm}
The specifics for \cref{alg:expansion:initialization,alg:expansion:indices,alg:expansion:matrix} are elaborated in the next subsections.
%

Since our proposed method works iteratively, any given initial solution can be incorporated easily.
Either we can start off with a random sub-permutation or something more elaborated like analytical or force-directed placement \cite{feld2017fieldplacer}.

\subsubsection{Choosing Indices}\label{sec:indices}

Instead of optimizing over the whole index set $[m]$ in each iteration we can reduce the problem size by considering an index set $\mathcal{I}\subset [m]$ of size $k$.
These indices can be chosen randomly but having a deep understanding of the underlying problem setting one could use a more informative approach.
For example, we could choose the indices depending on the impact of the overall cost, i.e.,
\begin{equation}
	\argmax_{\set{I}\subset [m],\left|\set{I}\right|=k}\sum_{i\in \set{I}}\sum_{j=1}^{n}F_{i,j}D_{\pi(i),\pi(j)}\;. \label{eq:worst}
\end{equation} 
Intuitively, it makes sense to greedily permute the currently worst performing indices. However, one might get stuck in a local optimum too early. 
Both methods (random and greedy) are investigated later on in \cref{sec:experiments}.

Even though we now have a problem dimension only dependent on $k$ and not the number of locations $n$, this approach is not yet guaranteed to converge towards an optimal solution.
If we only consider index sets $\mathcal{I}\subset [m]$ we stick to permute the initial sub-permutation and thus concentrate on a fixed set of $m$ locations. 
To prevent this, in each iteration, we also sample a set $\mathcal{J}\subset [n]$ of $k_u\le k$ unbound locations, i.e., locations which are not assigned to a functional block.
We sample each unbound location with a probability proportional to the distance to the nearest neighbor in the set of bound locations.
With this method, we also explore the set of unbound locations and can place the given functional blocks on the whole chip grid. 

\subsubsection{Choosing Cycles}\label{sec:cycles}

For fixed numbers of indices $k,k_u\le m$, we iteratively sample a set of disjoint $2$-cycles and optimize the current permutation until every $2$-cycle has occurred in the sampling process (Lines \ref{alg:expansion:inner_start}-\ref{alg:expansion:inner_end}).
The question arises on which indices these cycles should be sampled.
Considering an index set $\set{I}=\{i_1,\dots,i_k\}\subset [m]$ we can compute the sub-permuted index set under the sub-permutation $\pi$ as $\set{I}_{\pi}\defeq\{\pi(i_1),\dots,\pi(i_k)\}\subset [n]$.
We first sample $k_u$ disjoint $2$-cycles which map $\{i_1,\dots,i_{k_u}\}$ to $\mathcal{J}$.
Secondly, we sample $\left\lfloor (k-k_u)/2\right\rfloor$ disjoint $2$-cycles which map $\{i_{k_u+1},\dots,i_k\}$ to $\{\pi(i_{k_u+1}),\dots,\pi(i_k)\}$.
Since there are restrictions on which functional block can be mapped to which chip location (e.g. a LUT cannot be placed on an IO cell), one cannot simply sample these cycles arbitrarily between chosen indices (see \cref{fig:constraints}).
\begin{figure}
	\centering
	\includegraphics[width=0.4\textwidth]{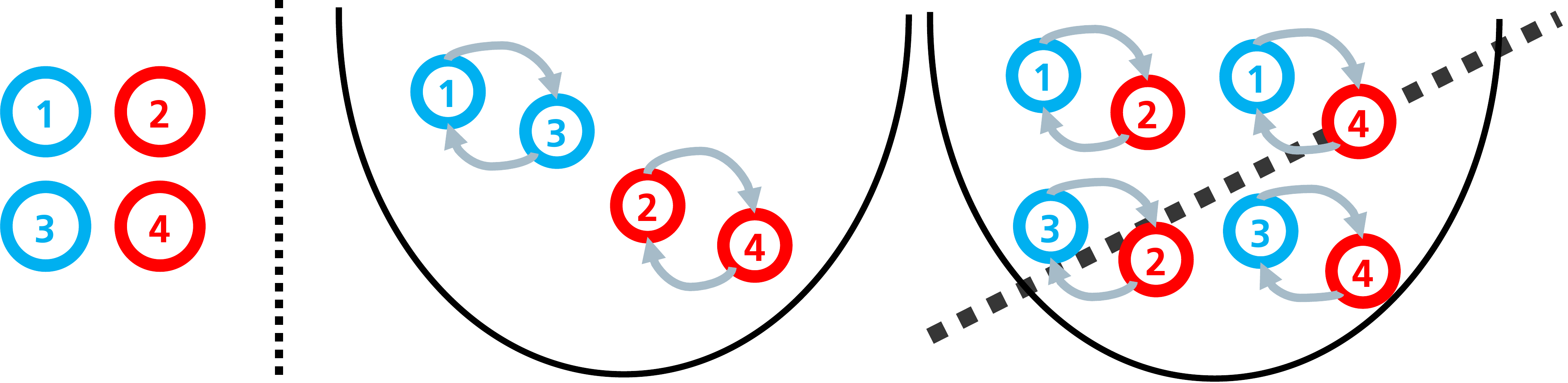}
	\caption{Illustration of allowed 2-cycles. Current placement (left) with corresponding cell types (cyan and red). Exemplary legal and illegal 2-cycles (right).}
	\label{fig:constraints}
\end{figure}
However, this does not pose a problem for this framework, because these constraints can be integrated into the sampling process.
The total number of $2$-cycles is thus $s=k_u+\left\lfloor (k-k_u)/2\right\rfloor$, which is only dependent on the freely selectable index set sizes $k$ and $k_u$.
Since the \QUBO dimension corresponds exactly to the number of considered disjoint cycles, we can conveniently adapt the problem to the available hardware size, either for real quantum annealers or classical digital annealing \QUBO solvers.
However, there is a trade-off between problem size and performance, which will be investigated later on in \cref{sec:experiments}. 

\subsubsection{Constructing $\mat{Q}\left(\set{I},\set{J}\right)$}\label{sec:subproblem}

It remains to clarify how the matrix $\mat{Q}\left(\set{I},\set{J}\right)$ from \cref{alg:expansion:submatrix} is constructed.
One way would be to precompute the cost matrix $\mat{Q}\in\mathbb R^{mn\times mn}$ and then using standard methods for reducing the \QUBO size with fixed variables.
Since real-world algorithms to be implemented on an \FPGA-chip can contain up to millions of functional blocks and chip locations, the size of the cost matrix $\mat{Q}\in\mathbb R^{mn\times mn}$ (\cref{eq:qap_qubo}) can get infeasible to hold the precomputed matrix in memory.
We resolve this issue by exploiting the tensor product-like structure of $\mat{Q}=\mat{F}\otimes\mat{D}$.
\begin{definition}
	For an index set $\set{I}\subset[m]$ and a sub-permutation matrix $\mat{P}\in\mathbb P_{m,n}$, define $\set{I}^c\defeq [m]\setminus \set{I}$ and let $\set{I}_{\pi},\set{I}^c_{\pi}\subset [n]$ be the sets created by applying the underlying sub-permutation $\pi$ of $\mat{P}$ to $\set{I},\set{I}^c$.
\end{definition}
\begin{lemma}
	Given $\mat{F}\in\mathbb R^{m\times m}$, $\mat{D}\in\mathbb R^{n\times n}$, $\mat{P}\in\mathbb P_{m,n}$, $\set{I}\subset [m]$ with $\left|\mathcal I\right|=k$ and $\set{J}\subset [n]$ with $\left|\mathcal I\right|=k_u$ it holds
	\begin{align*}
		&\argmin_{\vec{x}\left(\set{I},\set{J}\right)\in\{0,1\}^{k(k+k_u)}}\vec{x}^{\top}\mat{Q}\vec{x} \\
		=&\argmin_{\vec{x}\left(\set{I},\set{J}\right)\in\{0,1\}^{k(k+k_u)}}\left(\vec{x}\left(\set{I},\set{J}\right)\right)^{\top}\mat{Q}\left(\set{I},\set{J}\right)\vec{x}\left(\set{I},\set{J}\right)\;,
	\end{align*}
	with $\vec{x}\left(\set{I},\set{J}\right)\defeq\vect(\mat{P}_{\set{I},\set{I}_{\pi}'})$, $\set{I}_{\pi}'\defeq\set{I}_{\pi}\cup\set{J}$, $\vec{x}=\vect\left(\mat{P}\right)$ and
	\begin{equation}
		\mat{Q}\left(\set{I},\set{J}\right)\defeq \mat{F}_{\set{I}}\otimes \mat{D}_{\set{I}_{\pi}'}+2\ \diag\left(\vect\left(\mat{F}_{\set{I},\set{I}^c} \mat{D}_{\set{I}_{\pi}^c,\set{I}_{\pi}'}\right)\right)\;.
		\label{eq:q_i}
	\end{equation}
\end{lemma}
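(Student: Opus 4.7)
My plan is to exploit the identity $\vec{x}^\top \mat{Q}\vec{x} = c(\mat{P}) = \tr(\mat{F}\mat{P}\mat{D}\mat{P}^\top)$ together with a row-partition of $\mat{P}$. Concretely, I would write $\mat{P} = \mat{P}_{\set{I}} + \mat{P}_{\set{I}^c}$, where $\mat{P}_{\set{I}}$ keeps the rows indexed by $\set{I}$ (and zeroes the remainder) and $\mat{P}_{\set{I}^c}$ does the opposite. Under the optimization in question, only $\mat{P}_{\set{I}}$ varies, and by construction its non-zero columns are restricted to $\set{I}'_\pi = \set{I}_\pi \cup \set{J}$, whereas $\mat{P}_{\set{I}^c}$ is entirely fixed: its sole non-zero in each row $i\in\set{I}^c$ sits at column $\pi(i)\in\set{I}^c_\pi$.

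Substituting this decomposition into $\tr(\mat{F}\mat{P}\mat{D}\mat{P}^\top)$ and expanding bilinearly yields four trace terms. The term $\tr(\mat{F}\mat{P}_{\set{I}^c}\mat{D}\mat{P}_{\set{I}^c}^\top)$ is independent of $\vec{x}(\set{I},\set{J})$ and can be dropped without affecting the argmin. The purely quadratic term $\tr(\mat{F}\mat{P}_{\set{I}}\mat{D}\mat{P}_{\set{I}}^\top)$ collapses, owing to the identified row- and column-supports, to $\tr(\mat{F}_{\set{I}}\,\mat{P}_{\set{I},\set{I}'_\pi}\,\mat{D}_{\set{I}'_\pi}\,\mat{P}_{\set{I},\set{I}'_\pi}^\top)$; applying the standard trace--Kronecker identity (with the row-major $\vect$ convention used in the paper) rewrites this as $\vec{x}(\set{I},\set{J})^\top(\mat{F}_{\set{I}}\otimes\mat{D}_{\set{I}'_\pi})\,\vec{x}(\set{I},\set{J})$, which supplies the first summand of $\mat{Q}(\set{I},\set{J})$.

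The two mixed traces $\tr(\mat{F}\mat{P}_{\set{I}}\mat{D}\mat{P}_{\set{I}^c}^\top) + \tr(\mat{F}\mat{P}_{\set{I}^c}\mat{D}\mat{P}_{\set{I}}^\top)$ are linear in the free entries of $\mat{P}_{\set{I}}$, since $\mat{P}_{\set{I}^c}$ is fixed. By symmetry of $\mat{F}$ and $\mat{D}$ in the placement setting, both traces coincide and combine into $2\,\tr(\mat{F}\mat{P}_{\set{I}}\mat{D}\mat{P}_{\set{I}^c}^\top)$. Expanding this entrywise, and using that $\mat{P}_{\set{I}^c}$ realizes the bijection $\set{I}^c\to\set{I}^c_\pi$ (so the $\mat{P}_{\set{I}^c}$-sums collapse to index substitutions), reduces the contribution to $2\sum_{j\in\set{I},\,k\in\set{I}'_\pi}(\mat{F}_{\set{I},\set{I}^c}\mat{D}_{\set{I}^c_\pi,\set{I}'_\pi})_{j,k}\,(\mat{P}_{\set{I},\set{I}'_\pi})_{j,k}$. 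Because the variables are binary, each linear term $\alpha\,x_i$ equals $\alpha\,x_i^2$, so these linear coefficients move onto the diagonal of the \QUBO matrix and produce exactly $2\,\diag(\vect(\mat{F}_{\set{I},\set{I}^c}\mat{D}_{\set{I}^c_\pi,\set{I}'_\pi}))$.

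The main bookkeeping obstacle is the consistent alignment of the four index families ($\set{I}$, $\set{I}^c$, $\set{I}_\pi'$, $\set{I}^c_\pi$) across the trace expansion, together with matching the paper's row-major $\vect$ convention to the factor ordering of $\mat{F}\otimes\mat{D}$; both are routine but easy to get wrong. Once the quadratic piece, the symmetrized linear piece, and the discarded constant are combined, adding the constant back to both sides leaves the argmin unaffected, giving the claimed identity.
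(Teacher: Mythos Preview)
Your proposal is correct and follows essentially the same route as the paper: partition $\mat{P}$ by the row index sets $\set{I}$ and $\set{I}^c$, expand $\tr(\mat{F}\mat{P}\mat{D}\mat{P}^\top)$ bilinearly, discard the constant block, vectorize the quadratic block via the trace--Kronecker identity, and absorb the linear cross term onto the diagonal using $x_i^2=x_i$. You are in fact slightly more explicit than the paper in two places: you flag that the factor $2$ on the cross term relies on symmetry of $\mat{F}$ and $\mat{D}$ (the paper simply writes the decomposition with the $2$ already present), and you spell out the $x_i=x_i^2$ diagonalization step.
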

\begin{proof}
	With the following equality
	\begin{align*}
		&\ \tr \left( \mat{F} \mat{P} \mat{D} \mat{P}^{\top} \right) \\
		=&\ \tr \left( \mat{F}_{\set{I}}\mat{P}_{\set{I},\set{I}_{\pi}'} \mat{D}_{\set{I}} \mat{P}_{\set{I},\set{I}_{\pi}'}^{\top}\right) 
		+\tr \left( \mat{F}_{\set{I}^c}\mat{P}_{\set{I}^c,\set{I}^c_{\pi}} \mat{D}_{\set{I}^c} \mat{P}_{\set{I}^c,\set{I}^c_{\pi}}^{\top}\right)  \\
		+&2\ \tr \left( \mat{F}_{\set{I},\set{I}^c}\mat{P}_{\set{I}^c,\set{I}^c_{\pi}} \mat{D}_{\set{I}^c,\set{I}} \mat{P}_{\set{I},\set{I}_{\pi}'}^{\top}\right)\;,
	\end{align*}
	we can deduce
	\begin{align*}
		&\argmin_{\vec{x}\left(\set{I},\set{J}\right)\in\{0,1\}^{k(k+k_u)}}\vec{x}^{\top}\mat{Q}\vec{x} 
		 \Leftrightarrow
		\argmin_{\mat{P}_{\set{I},\set{I}_{\pi}'}\in\mathbb P_{k,k+k_u}}\cost{\mat{P}} \\
		=&\argmin_{\mat{P}_{\set{I},\set{I}_{\pi}'}\in\mathbb P_{k,k+k_u}}\tr \left( \mat{F}_{\set{I}}\mat{P}_{\set{I},\set{I}_{\pi}'} \mat{D}_{\set{I}} \mat{P}_{\set{I},\set{I}_{\pi}'}^{\top}\right) \\
		&\qquad +2\ \tr \left( \mat{F}_{\set{I},\set{I}^c}\mat{P}_{\set{I}^c,\set{I}^c_{\pi}} \mat{D}_{\set{I}^c,\set{I}} \mat{P}_{\set{I},\set{I}_{\pi}'}^{\top}\right)\\
		\Leftrightarrow&\argmin_{\vec{x}\left(\set{I},\set{J}\right)\in\{0,1\}^{k(k+k_u)}}\left(\vec{x}\left(\set{I},\set{J}\right)\right)^{\top} \left(\mat{F}_{\set{I}}\otimes \mat{D}_{\set{I}_{\pi}'}\right)\vec{x}\left(\set{I},\set{J}\right) \\
		&\qquad+2\ \left(\vect\left(\mat{F}_{\set{I},\set{I}^c} \mat{D}_{\set{I}_{\pi}^c,\set{I}_{\pi}'}\right)\right)^{\top}\vec{x}\left(\set{I},\set{J}\right) \\
		=&\argmin_{\vec{x}\left(\set{I},\set{J}\right)\in\{0,1\}^{k(k+k_u)}}\left(\vec{x}\left(\set{I}\right)\right)^{\top}\mat{Q}\left(\set{I},\set{J}\right)\vec{x}\left(\set{I},\set{J}\right)\;,
	\end{align*}
	with $\mat{Q}\left(\set{I},\set{J}\right)$ as defined in \cref{eq:q_i}.
\end{proof}

With having clarified all steps of \cref{alg:expansion}, we can have a look at the behavior of this algorithm.

\section{Experiments} \label{sec:experiments}

We conduct experiments with a fictional \FPGA architecture for analyzing the behavior of our proposed algorithm.
We choose this as a generic minimum baseline for all FPGA architectures, ignoring implementation details like grouping into slices, carry chains etc, which might be vendor specific and thus not translate easily to other FPGAs. 
In this architecture, we assume that every LUT has an adjacent register, so that its usage is irrelevant to the placement process and can be ignored. 
We consider only the data path, i.e., ignore clock net routing and control signals like reset and clock enable.
However, this information can be integrated into $\mat{F}$ and $\mat{D}$.

The fictional \FPGA architecture contains three different cell types: IO cells, BRAM cells and LUT cells.
The legend for upcoming plots is indicated in \cref{fig:legend}.
\begin{figure}[t]
	\centering
	\includegraphics[width=0.48\textwidth]{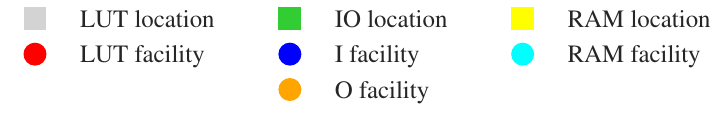}
	\caption{Different cell types for our fictional \FPGA architecture along with plotted colors.}
	\label{fig:legend}
\end{figure} 
For the upcoming experiments we assume an \FPGA chip which consists of $21\times 21$ cells. 
It contains IO cells at the border and 16 BRAM cells distributed uniformly over the gird, with the rest being LUT cells (see e.g. \cref{fig:place100}).
We are thus faced with $n=21^2=441$ locations.

\subsection{Generic Examples}\label{sec:generic_examples}

For examining the behavior of \cref{alg:expansion}, we sample 10 different problem instances with $m=100$ and $m=200$ facilities, respectively.
For every instance we assume two IO cells, imitating a single input and a single output cell.
The rest of the cell types are randomly sampled corresponding to the ratio of the underlying architecture.

We compare the performance of \cref{alg:expansion} with solving the \QUBO given in \cref{eq:qap_qubo} using different \QUBO solvers.
As a classical software solver we use a simulated annealing (SA) implemented in the python software package D-Wave Ocean (\url{https://docs.ocean.dwavesys.com/en/stable}) with default parameters.
As a second classical solver, we utilize a \QUBO hardware solver, which is denoted as digital annealing (DA). 
Similar to QA, DA is standalone but is not based on quantum technology and uses classical algorithms.
One can set up the running time/annealing time of this device and we henceforth set this time to $0.1~s$ which is equivalent to evaluating $\approx 160k$ candidate solutions.
Thirdly, we use a real quantum annealer (QA), namely a D-Wave Advantage System 5.4 with 5614 qubits and 40,050 couplers, fixing the annealing time to $40~\mu s$ and taking the best out of $100$ reads.
Since our algorithm \CX contains random decisions, such as choosing a set of cycles in Line \ref{alg:expansion:cycles}, we plot the average performance over 10 runs and indicate the $95\%$-confidence intervals.

We start with depicting the performance of \CX over 50 iterations in terms of the \QAP cost in \cref{fig:generic_k}, varying the number of chosen unbound indices $k_u$.
We fix $k=100$ and compare $k_u\in\{10, 50, 100\}$ using the SA solver, with the performance being averaged over the 10 generated instances for $m=100$ and $m=200$.
Moreover, the impact of different methods for choosing sub-problems in Line \ref{alg:expansion:indices} is indicated, comparing random sampling with worst performing indices \cref{eq:worst}.
\begin{figure*}[t]
	\centering
	\begin{subfigure}{0.44\textwidth}
		\centering
		\includegraphics[width=1\textwidth]{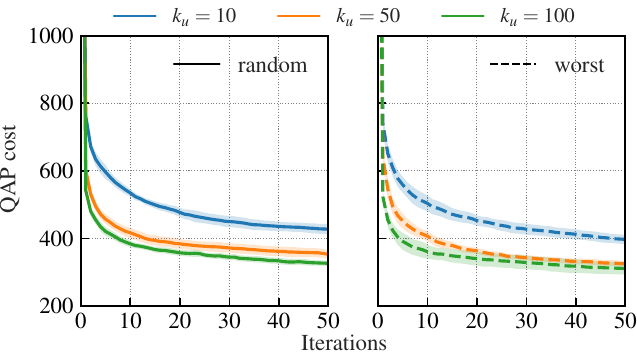}
		\caption{$m=100$.}
		\label{fig:generic_k:100}
	\end{subfigure}
	\hspace{0.04\textwidth}
	\begin{subfigure}{0.44\textwidth}
		\centering
		\includegraphics[width=1\textwidth]{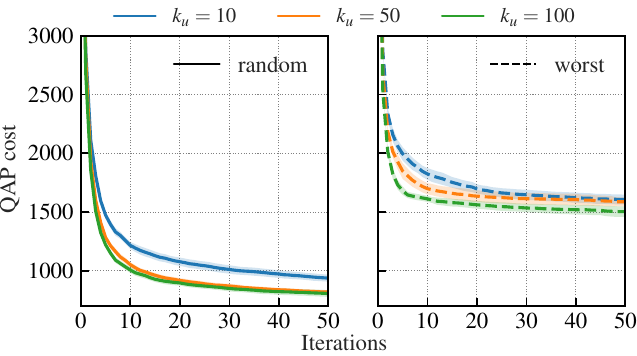}
		\caption{$m=200$.}
		\label{fig:generic_k:200}
	\end{subfigure}
	\caption{Depicting the effect of varying $k_u$ when $k$ is fixed to a certain value, comparing choosing random sub-problems in Line \cref{alg:expansion:indices} with choosing worst performing indices \cref{eq:worst}. Here, $k=100$ and the \QAP cost is depicted over 50 iterations used in the \CX. We compare the costs for 10 randomly generated problems with a problem size of 100 (a) with 10 problems of size 200 (b). }
	\label{fig:generic_k}
\end{figure*}
We observe that the \QAP cost decreases with every iteration of the algorithm.
For $m=100$ the random indices choosing method performs similar to the method of worst indices, contrary to the case $m=200$.
The larger the dimension of our problem gets, considering only the worst indices can lead to fast convergence to local optima, leading to an overall worse placement in the end.
Furthermore, we can see that with an increasing number of unbound variables $k_u$, the performance of the \CX increases, since the space of unbound cells is more thoroughly explored.
However, increasing this parameter $k_u$ also leads to a larger \QUBO size (\cref{eq:alpha_qubo}), leading to a trade-off between problem size and performance for a fixed number of iterations.

An experiment with similar configuration can be found in \cref{fig:generic_k_u}, but we know compare the performance varying the dimension of the chosen sub-problems $k$. 
\begin{figure*}[t]
	\centering
	\begin{subfigure}{0.44\textwidth}
		\centering
		\includegraphics[width=1\textwidth]{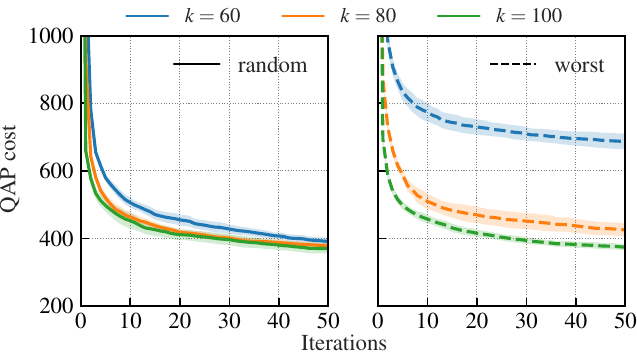}
		\caption{$m=100$.}
		\label{fig:generic_k_u:100}
	\end{subfigure}
	\hspace{0.04\textwidth}
	\begin{subfigure}{0.44\textwidth}
		\centering
		\includegraphics[width=1\textwidth]{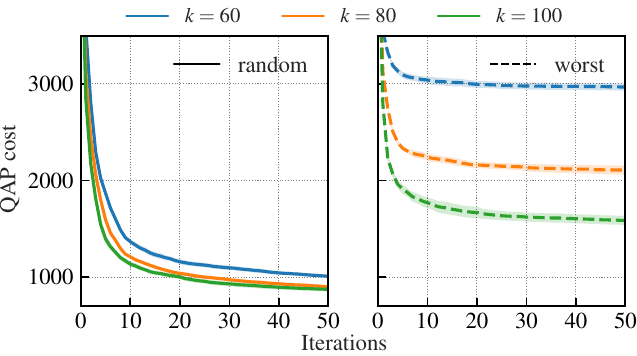}
		\caption{$m=200$.}
		\label{fig:generic_k_u:200}
	\end{subfigure}
	\caption{Depicting the effect of varying $k$ when $k_u$ is fixed to a certain value, comparing choosing random sub-problems in Line \cref{alg:expansion:indices} with choosing worst performing indices \cref{eq:worst}. Here, $k_u=30$ and the \QAP cost is depicted over 50 iterations used in the \CX. We compare the costs for 10 randomly generated problems with a problem size of 100 (a) with 10 problems of size 200 (b). }
	\label{fig:generic_k_u}
\end{figure*}
Here, the number of unbound indices is fixed to $k_u=30$.
We observe similar behavior as for \cref{fig:generic_k} but choosing the worst sub-indices especially falls back in performance to choosing random indices for small $k$ and large $m$.

Fixing $k=100$ and $k_u=50$, we plot intermediate placement results of \cref{alg:expansion} after $1$, $10$ and $50$ iterations in \cref{fig:place100,fig:place200}.
We fix the locations of the IO cells before the actual placement and use a random initialization.
\begin{figure*}[t]
	\centering
	\begin{subfigure}{0.22\textwidth}
		\centering
		\includegraphics[width=1\textwidth]{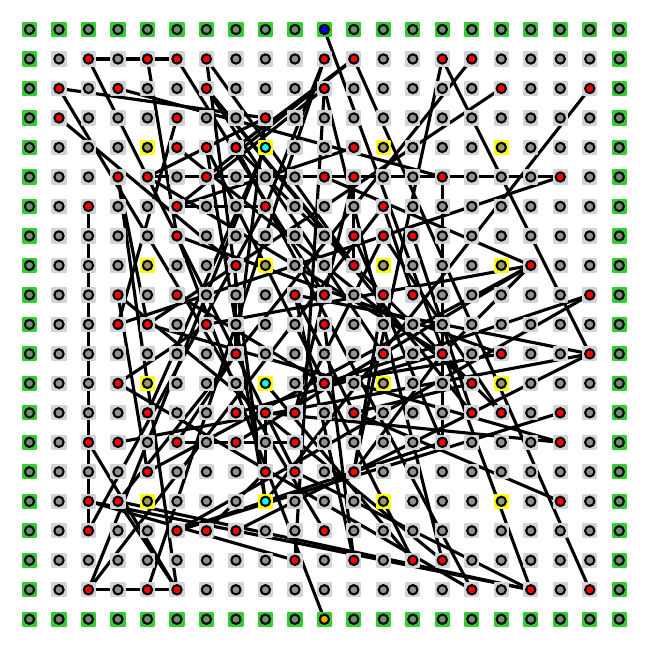}
		\caption{Initial placement, cost 2492}
		\label{fig:place100:0}
	\end{subfigure}
	\hspace{0.02\textwidth}
	\begin{subfigure}{0.22\textwidth}
		\centering
		\includegraphics[width=1\textwidth]{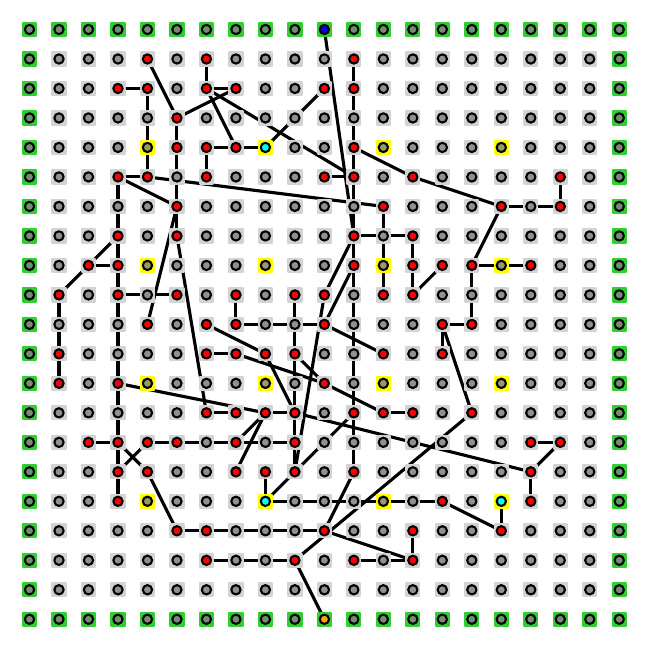}
		\caption{After 1 iteration, cost 576}
		\label{fig:place100:1}
	\end{subfigure}
	\hspace{0.02\textwidth}
	\begin{subfigure}{0.22\textwidth}
		\centering
		\includegraphics[width=1\textwidth]{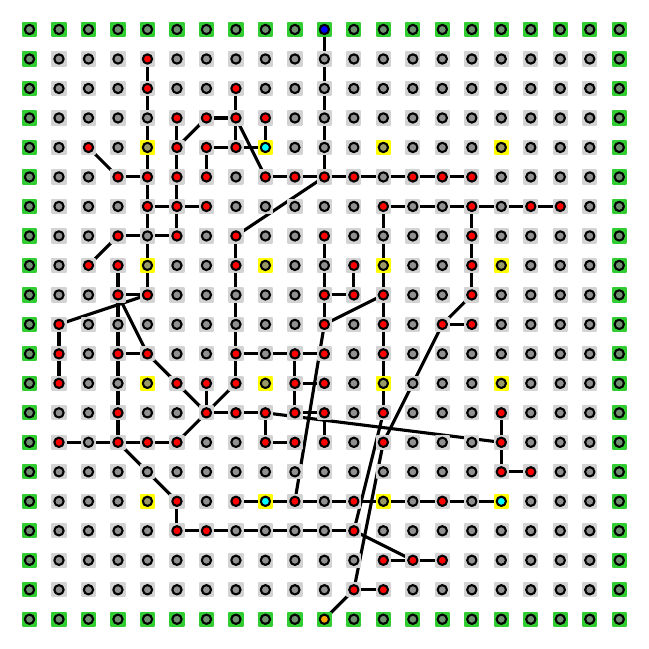}
		\caption{After 10 iterations, cost 388}
		\label{fig:place100:10}
	\end{subfigure}
	\hspace{0.02\textwidth}
	\begin{subfigure}{0.22\textwidth}
		\centering
		\includegraphics[width=1\textwidth]{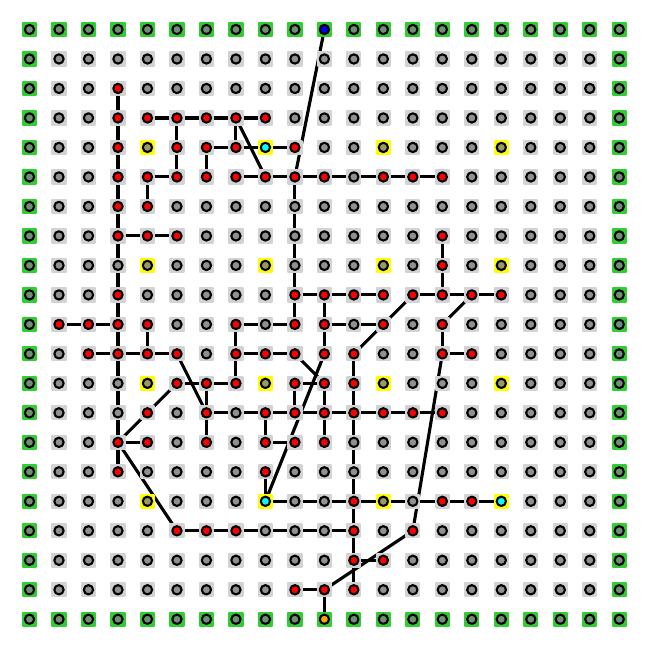}
		\caption{After 50 iterations, cost 328}
		\label{fig:place100:50}
	\end{subfigure}
	\caption{Intermediate placement results for an exemplary generic example with 100 facilities. The initial random placement (a) is indicated along with the result of applying our algorithm for 1 iteration (b), 10 iterations (c) and 50 iterations (d). The placement of the two IO facilities is fixed and the corresponding \QAP costs are indicated. See \cref{fig:legend} for a legend.}
	\label{fig:place100}
\end{figure*}
We can see that the initial placement is very bad, in the sense that the connecting edges are spread over the whole chip grid and cross each other, leading to a large \QAP cost.
With an increasing number of iterations, the placement gets a more grid-like structure with less crossings, leading to very preferable results for a potential subsequent routing.
In \cref{fig:place200}, the intermediate placements for a random instance with $m=200$ and fixed IO locations is shown.
We can see, that it takes more iterations to achieve a good placement on the first sight, than for $m=100$.
\begin{figure*}[t]
	\centering
	\begin{subfigure}{0.22\textwidth}
		\centering
		\includegraphics[width=1\textwidth]{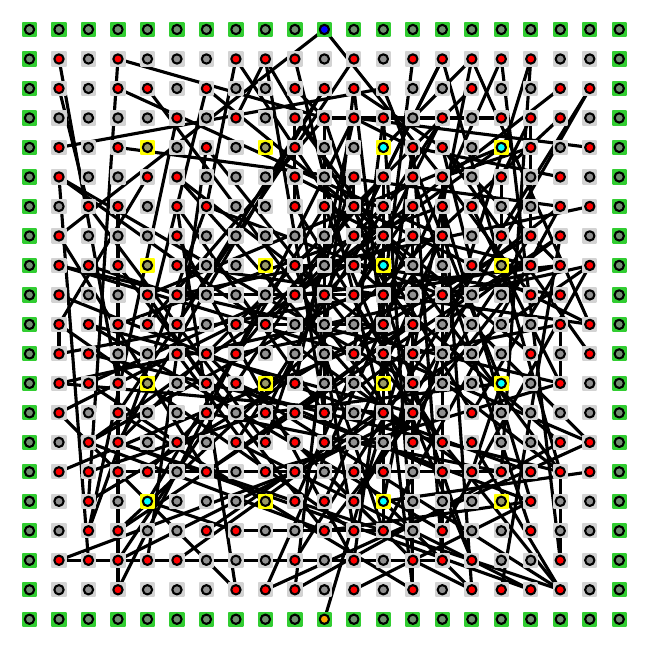}
		\caption{Initial placement, cost 4954}
		\label{fig:place200:0}
	\end{subfigure}
	\hspace{0.02\textwidth}
	\begin{subfigure}{0.22\textwidth}
		\centering
		\includegraphics[width=1\textwidth]{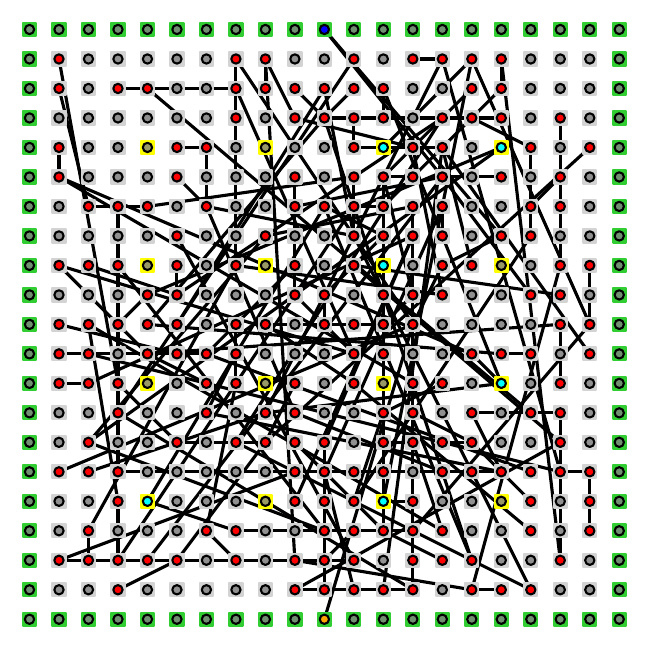}
		\caption{After 1 iteration, cost 2780}
		\label{fig:place200:1}
	\end{subfigure}
	\hspace{0.02\textwidth}
	\begin{subfigure}{0.22\textwidth}
		\centering
		\includegraphics[width=1\textwidth]{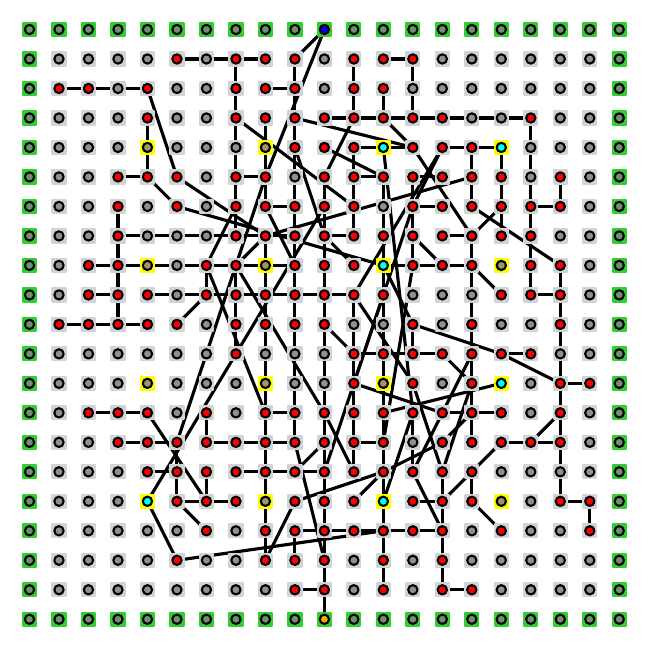}
		\caption{After 10 iterations, cost 1068}
		\label{fig:place200:10}
	\end{subfigure}
	\hspace{0.02\textwidth}
	\begin{subfigure}{0.22\textwidth}
		\centering
		\includegraphics[width=1\textwidth]{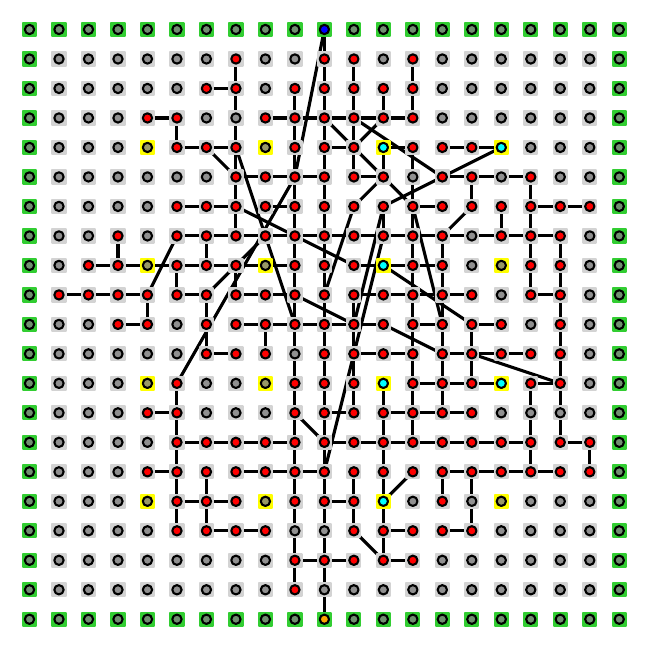}
		\caption{After 50 iterations, cost 768}
		\label{fig:place200:50}
	\end{subfigure}
	\caption{Intermediate placement results for an exemplary generic example with 200 facilities. The initial random placement (a) is indicated along with the result of applying our algorithm for 1 iteration (b), 10 iterations (c) and 50 iterations (d). The placement of the two IO facilities is fixed and the corresponding \QAP costs are indicated. See \cref{fig:legend} for a legend.}
	\label{fig:place200}
\end{figure*}

\subsection{CRC Example}\label{sec:crc}

As a real-world example, we consider a simple 32-bit Cyclic Redundancy Check (CRC-32) algorithm with 8-bit parallel input. This is synthesized by the open-source tool Yosys ({\url{https://yosyshq.net/yosys/}) 0.33 for the Lattice MachXO2 architecture. Wide LUTs and CCU2 carry chains are forbidden, so that the synthesized output only contains 78 LUT-4s and 32 registers. The Lattice MachXO2 is a current technology that is available in sizes starting from 256 LUTs\cite{latticexo2}, so it is comparable to our demo architecture.

For this real-world example, we conduct experiments with a similar configuration to the one used in \cref{fig:generic_k}.
In contrast to the previous experiments, we now do not vary our problem size $m$ but compare setups with fixed IO cells with the setup of optimizing the placement of these cells along with the remaining functional blocks.
In \cref{fig:crc_k}, we fix $k=100$ and vary $k_u\in\{10,50,100\}$.
\begin{figure*}[t]
	\centering
	\begin{subfigure}{0.44\textwidth}
		\centering
		\includegraphics[width=1\textwidth]{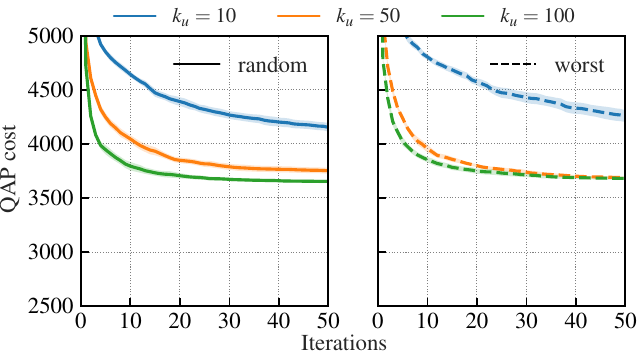}
		\caption{Fixed IO cells.}
		\label{fig:crc_k:fixed}
	\end{subfigure}
	\hspace{0.04\textwidth}
	\begin{subfigure}{0.44\textwidth}
		\centering
		\includegraphics[width=1\textwidth]{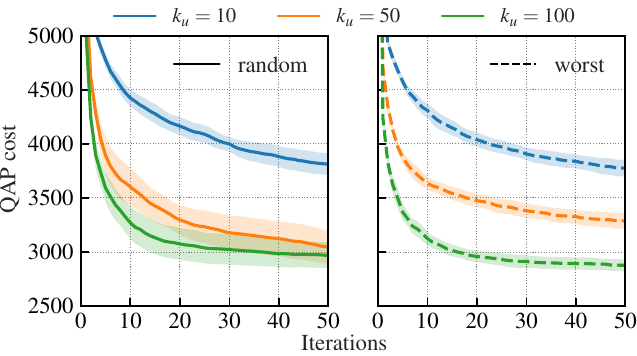}
		\caption{Unfixed IO cells.}
		\label{fig:crc_k:unfixed}
	\end{subfigure}
	\caption{Depicting the effect of varying $k$ when $k_u$ is fixed to a certain value, comparing choosing random sub-problems in Line \cref{alg:expansion:indices} with choosing worst performing indices \cref{eq:worst}. Here, $k_u=30$ and the \QAP cost is depicted over 50 iterations used in the \CX. We compare the costs for fixed IO cells (a) and unfixed IO cells (b) for the CRC-32.}
	\label{fig:crc_k}
\end{figure*}
We observe that fixing the IO cells leads to faster convergence and thus a worse placement than with the possibility of also optimizing the IO placements.
Although, for unfixed IO cells, the uncertainty in the outcomes is larger than in the fixed case.
Furthermore, we can see that the relative performance of only choosing $k_u=10$ unbound indices compared to $k_u=50,100$ is worse than in the generic case (cf. \cref{fig:generic_k}).
The number of needed unbound indices is thus heavily dependent on the underlying problem structure.

In \cref{fig:crc_k_u} we depict the effect of varying $k$ when $k_u$ is fixed to 30.
\begin{figure*}[t]
	\centering
	\begin{subfigure}{0.44\textwidth}
		\centering
		\includegraphics[width=1\textwidth]{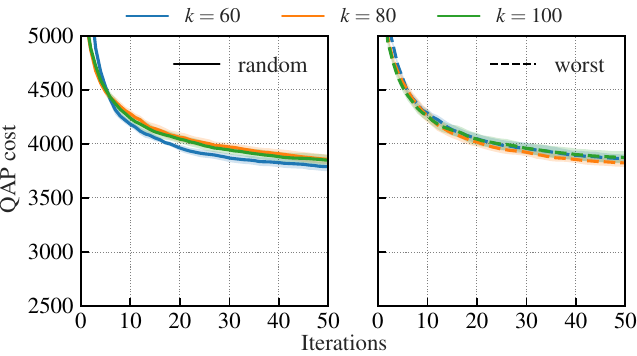}
		\caption{Fixed IO cells.}
		\label{fig:crc_k_u:fixed}
	\end{subfigure}
	\hspace{0.04\textwidth}
	\begin{subfigure}{0.44\textwidth}
		\centering
		\includegraphics[width=1\textwidth]{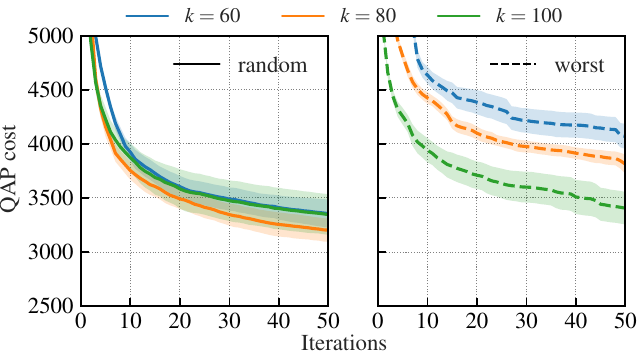}
		\caption{Unfixed IO cells.}
		\label{fig:crc_k_u:unfixed}
	\end{subfigure}
	\caption{Depicting the effect of varying $k$ when $k$ is fixed to a certain value, comparing choosing random sub-problems in Line \cref{alg:expansion:indices} with choosing worst performing indices \cref{eq:worst}. Here, $k=100$ and the \QAP cost is depicted over 50 iterations used in the \CX. We compare the costs for fixed IO cells (a) and unfixed IO cells (b) for the CRC-32. }
	\label{fig:crc_k_u}
\end{figure*}
We observe that for choosing random problems, changing the sub-problem size does not have a very large effect on the \QAP cost.
Thus, one already can achieve good placement results with a small problem size.
However, choosing the sub-problems greedily with \cref{eq:worst} is more sensitive in terms of performance outcomes for different problem sizes.

Intermediate placements after 1, 10 and 50 iterations for fixed and unfixed IO cells can be found \cref{fig:place_fixed,fig:place_unfixed}.
\begin{figure*}[t]
	\centering
	\begin{subfigure}{0.22\textwidth}
		\centering
		\includegraphics[width=1\textwidth]{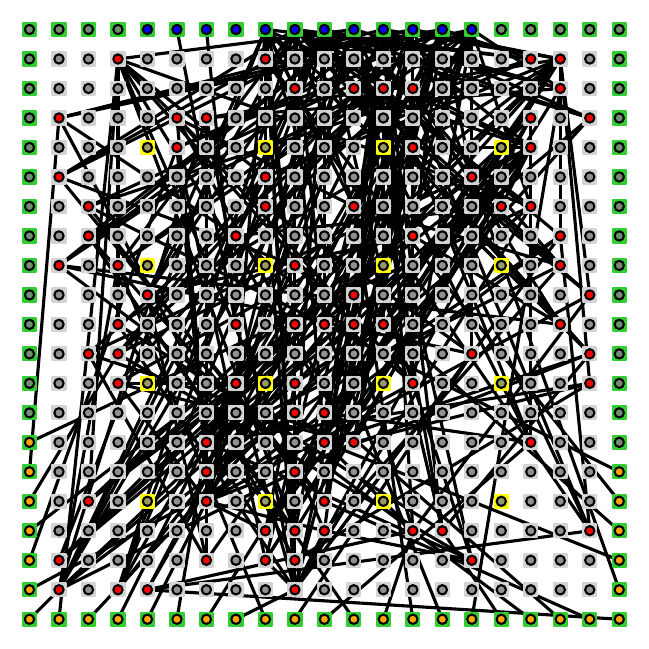}
		\caption{Initial placement, cost 9372}
		\label{fig:place_fixed:0}
	\end{subfigure}
	\hspace{0.02\textwidth}
	\begin{subfigure}{0.22\textwidth}
		\centering
		\includegraphics[width=1\textwidth]{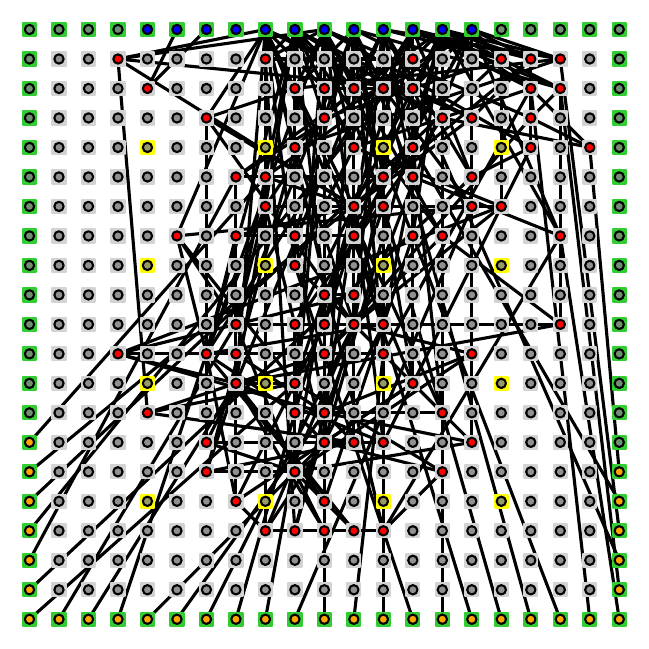}
		\caption{After 1 iteration, cost 4936}
		\label{fig:place_fixed:1}
	\end{subfigure}
	\hspace{0.02\textwidth}
	\begin{subfigure}{0.22\textwidth}
		\centering
		\includegraphics[width=1\textwidth]{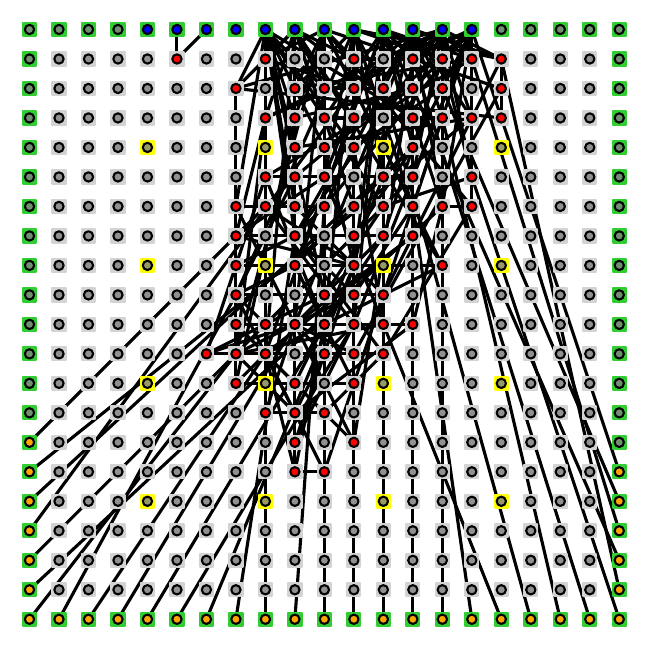}
		\caption{After 10 iterations, cost 4014}
		\label{fig:place_fixed:10}
	\end{subfigure}
	\hspace{0.02\textwidth}
	\begin{subfigure}{0.22\textwidth}
		\centering
		\includegraphics[width=1\textwidth]{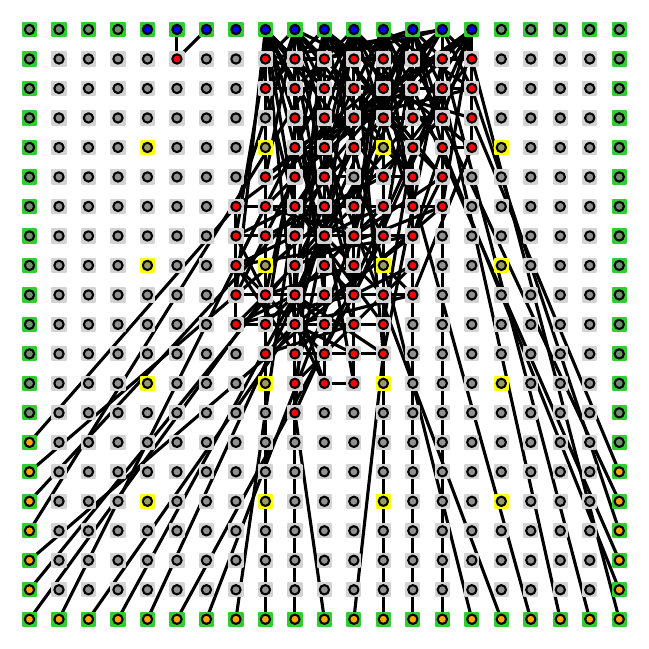}
		\caption{After 50 iterations, cost 3702}
		\label{fig:place_fixed:50}
	\end{subfigure}
	\caption{Intermediate placement results for the CRC-32 example. The initial random placement (a) is indicated along with the result of applying our algorithm for 1 iteration (b), 10 iterations (c) and 50 iterations (d). The placement of the IO facilities is unfixed and optimized. The corresponding \QAP costs are indicated. See \cref{fig:legend} for a legend.}
	\label{fig:place_fixed}
\end{figure*}
\begin{figure*}[t]
	\centering
	\begin{subfigure}{0.22\textwidth}
		\centering
		\includegraphics[width=1\textwidth]{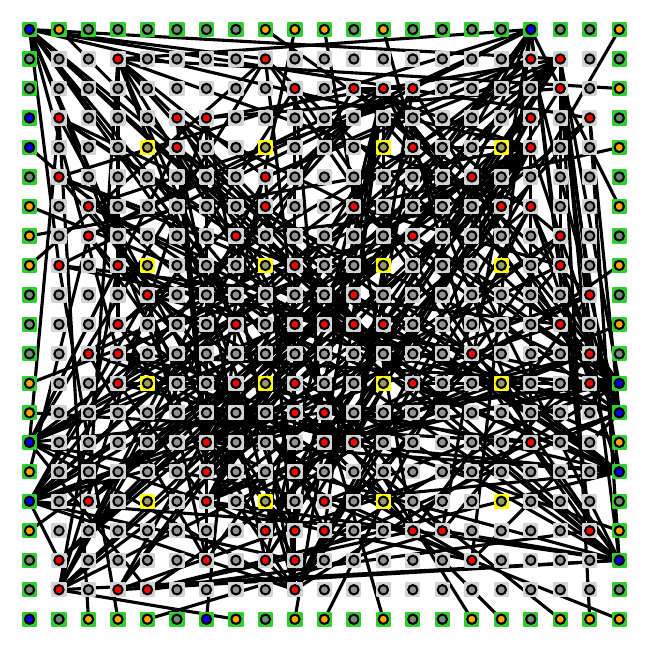}
		\caption{Initial placement, cost 9986}
		\label{fig:place_unfixed:0}
	\end{subfigure}
	\hspace{0.02\textwidth}
	\begin{subfigure}{0.22\textwidth}
		\centering
		\includegraphics[width=1\textwidth]{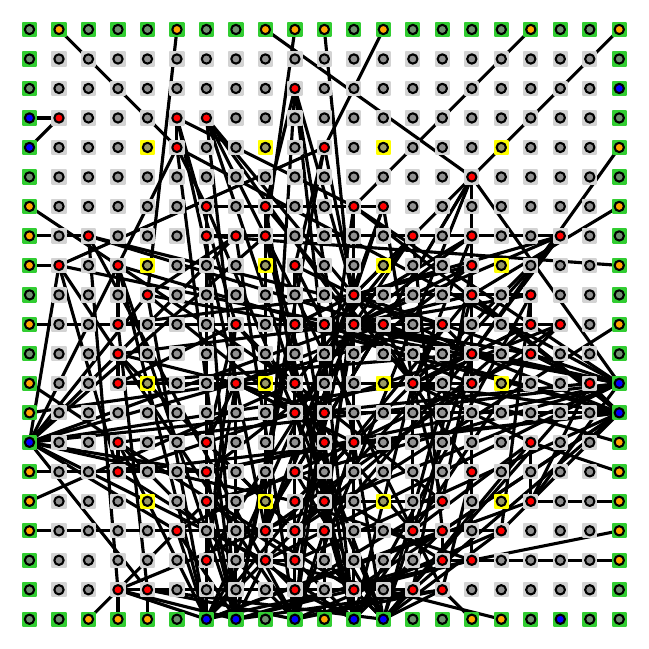}
		\caption{After 1 iteration, cost 5526}
		\label{fig:place_unfixed:1}
	\end{subfigure}
	\hspace{0.02\textwidth}
	\begin{subfigure}{0.22\textwidth}
		\centering
		\includegraphics[width=1\textwidth]{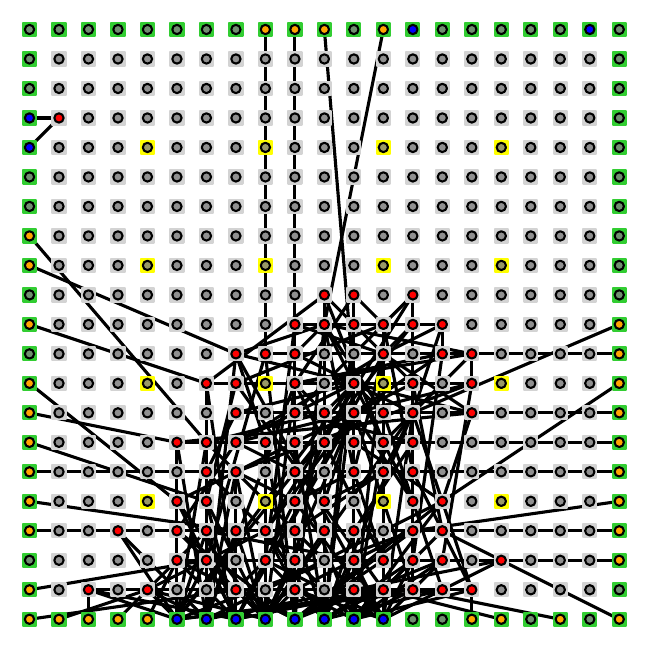}
		\caption{After 10 iterations, cost 3260}
		\label{fig:place_unfixed:10}
	\end{subfigure}
	\hspace{0.02\textwidth}
	\begin{subfigure}{0.22\textwidth}
		\centering
		\includegraphics[width=1\textwidth]{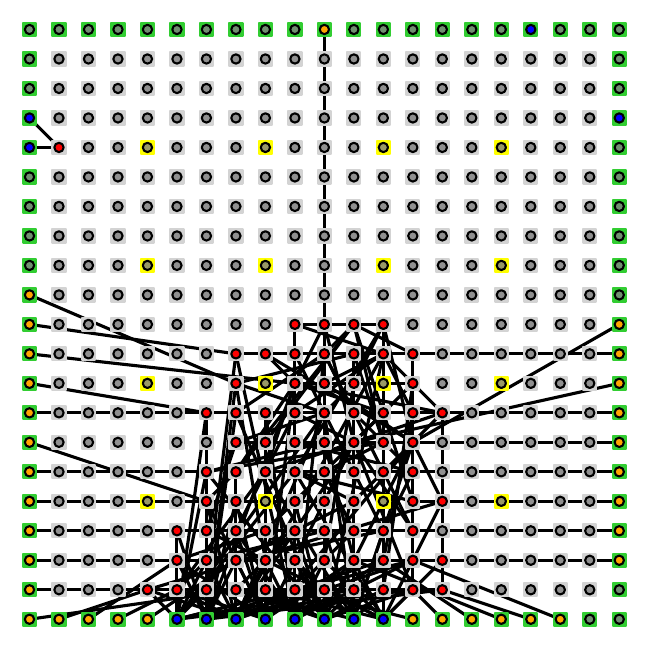}
		\caption{After 50 iterations, cost 2818}
		\label{fig:place_unfixed:50}
	\end{subfigure}
	\caption{Intermediate placement results for the CRC-32 example. The initial random placement (a) is indicated along with the result of applying our algorithm for 1 iteration (b), 10 iterations (c) and 50 iterations (d). The placement of the IO facilities is fixed and the corresponding \QAP costs are indicated. See \cref{fig:legend} for a legend.}
	\label{fig:place_unfixed}
\end{figure*}
Again, similarly to the observations in \cref{fig:place100,fig:place200}, we see that the placement also visually improves with an increasing number of iterations.
When the placement of the IO cells is also optimized, all IO cells are placed close by, leading to an overall better placement than in the fixed case.

Lastly, we conduct experiments with QA and DA. 
We compare the performance of these two hardware solvers with SA on the CRC-32 example with fixed IO cells in \cref{fig:dwave_evo}.
\begin{figure*}
	\centering
	\begin{subfigure}{0.22\textwidth}
		\centering
		\includegraphics[width=1\textwidth]{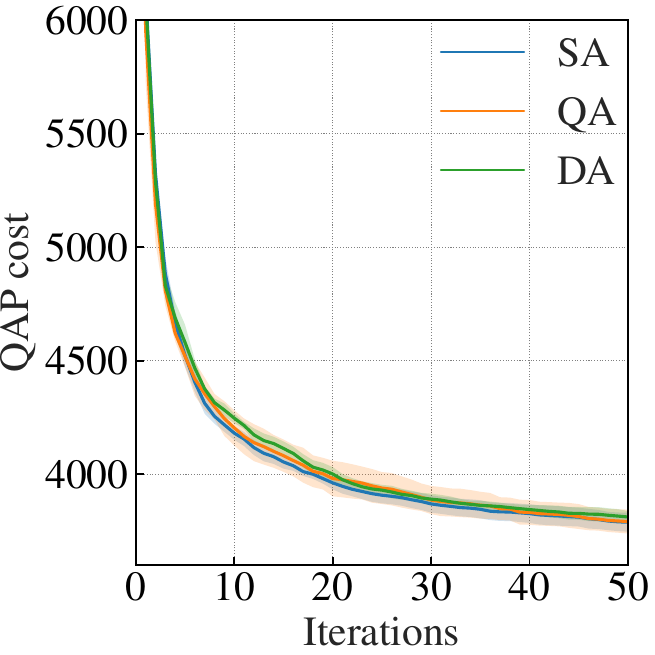}
		\caption{\QAP cost.}
		\label{fig:dwave_evo:qap}
	\end{subfigure}
	\hspace{0.02\textwidth}
	\begin{subfigure}{0.22\textwidth}
		\centering
		\includegraphics[width=1\textwidth]{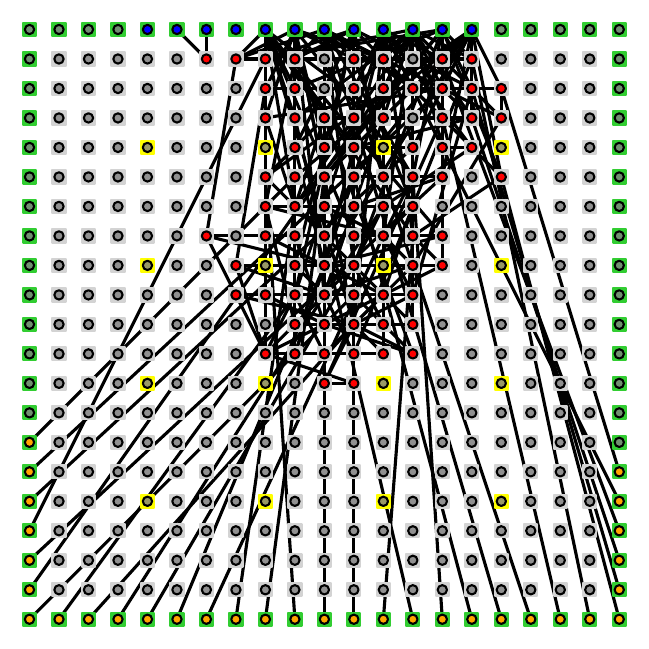}
		\caption{QA, cost 3732.}
		\label{fig:dwave_evo:qa}
	\end{subfigure}
	\hspace{0.02\textwidth}
	\begin{subfigure}{0.22\textwidth}
		\centering
		\includegraphics[width=1\textwidth]{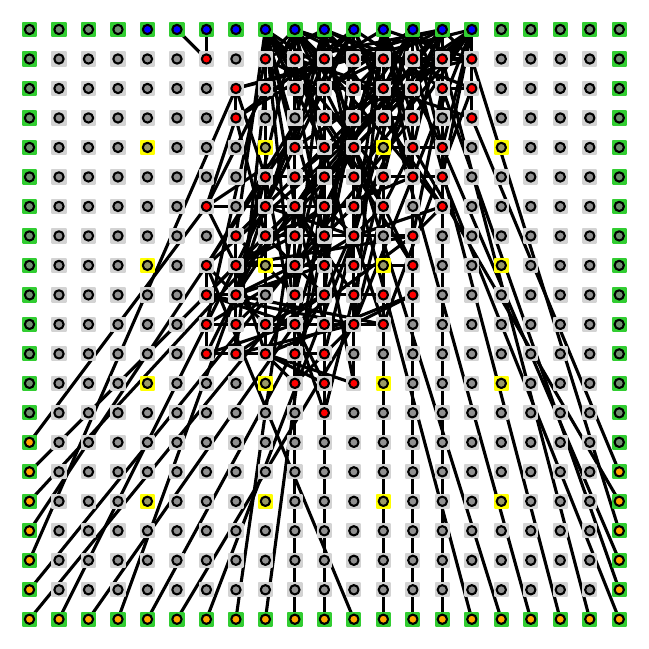}
		\caption{DA, cost 3790.}
		\label{fig:dwave_evo:evo}
	\end{subfigure}
	\hspace{0.02\textwidth}
	\begin{subfigure}{0.22\textwidth}
		\centering
		\includegraphics[width=1\textwidth]{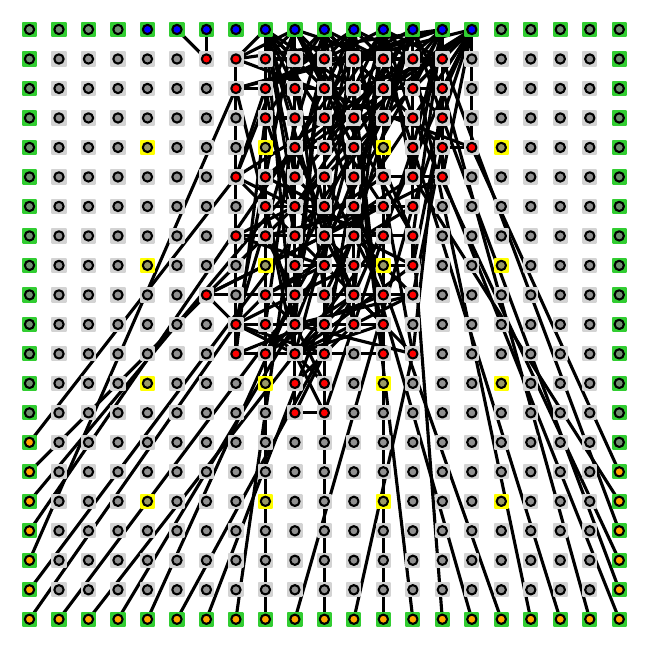}
		\caption{SA, cost 3758.}
		\label{fig:dwave_evo:sa}
	\end{subfigure}
	\caption{Performance comparison of the hardware solvers QA and DA with SA on the CRC-32 example. We choose random sub-problems and fix $k=60$ and $k=30$. We depict the \QAP cost over 50 iterations for the \CX (a) and exemplary placements after 50 iterations with QA (b), DA (c) and SA (d).}
	\label{fig:dwave_evo}
\end{figure*}
We fix $k=60$, $k_u=30$ and choose the sub-problems randomly. 
\cref{fig:dwave_evo:qap} depicts the change of the \QAP cost over an increasing number of iterations in the \CX.
We find all solvers to perform equally well for this problem.

\cref{fig:dwave_evo:qa,fig:dwave_evo:evo,fig:dwave_evo:sa} depict the placement results for specific runs after 50 iterations using QA, DA and SA, respectively.
Similarly good results to \cref{fig:place_fixed:50} can be observed, while the problem sizes are smaller.
Since the \QUBO problems in \eqref{eq:alpha_qubo} are well conditioned (integer valued and small dynamic ranges), real quantum hardware can achieve similar performance to classical solvers. This is an interesting result, since today's quantum technology is still in its infancy with limited computational power (number of qubits) and large proneness to errors.
A detailed discussion on the effect of the conditioning of \QUBO problems can be found in \cite{mucke2023optimum}. 

\section{Conclusion} \label{sec:conclusion}

In this paper, we showed that the \FPGA-\place problem can be solved with quantum computing.
For this, we first formulated the problem in an unbalanced \QAP framework, and then proceeded by working out \QUBO formulations, which can be conveniently solved with quantum annealing and classical hardware solvers.
The \QAP belongs to the hardest problems in NP, meaning there does not exist an approximate algorithm solving this problem in polynomial time. 
With the notion of sub-permutations, we find a new \QUBO formulation for the unbalanced \QAP without introducing dummy facilities, leading to a lower dimensionality.
The problem of incorporating constraints for the set of allowed sub-permutations is overcome by considering the iterative \CX algorithm.
It works by optimizing over sets of disjoint $2$-cycles, where the size of these sets can be conveniently adapted to the available hardware size.
Moreover, initial placements can easily be incorporated into this algorithm.

Experiments on a fictional \FPGA architecture leads to supporting the theoretical guarantees and show good results.
In this work, we consider binary flow matrices and Manhattan distances between locations on the \FPGA chip.
However, real architectures can easily be integrated into our framework, by adapting the distance matrix and flow matrix correspondingly.
These experiments are conducted on randomly generated problems as well as a small real-world circuit (CRC-32).
The \QUBO problems are solved using software solvers, classical hardware solvers, and real-world quantum annealers. 
We defer the investigation of large-scale use-cases as well as additional performance metrics to follow-up work. 
Here, we were more interested in theoretical properties. To this end, we investigated the effect of varying the sub-problem size $k$ and the number of unbound variables $k_u$, as well as the impact of different methods for choosing the sub-problems on the \QAP cost.
We find that our method allows us to trade-off the solution quality against the running time, 
which makes the \CX algorithm a very promising candidate for both, real-world quantum hardware and real-world placement problems. 


\clearpage
\bibliographystyle{ACM-Reference-Format}
\bibliography{lit}

\end{document}